\newtheorem{theorem}{Theorem}
\newtheorem{conjecture}{Conjecture}
\newtheorem{corollary}{Corollary}
\newtheorem{lemma}{Lemma}
\newenvironment{proof}[1][Proof.]{\begin{trivlist}
\item[\hskip \labelsep {\bfseries #1}]}{\end{trivlist}}
\newenvironment{acknowledgement}[1][Acknowledgement]{\begin{trivlist}
\item[\hskip \labelsep {\bfseries #1}]}{\end{trivlist}}
\newcommand{\AmS}{{\protect\the\textfont2
  A\kern-.1667em\lower.5ex\hbox{M}\kern-.125emS}}
\title{Maximum $\Delta$-edge-colorable subgraphs of class II graphs}
\author{Vahan V. Mkrtchyan{\thanks{email: vahanmkrtchyan2002@\{ysu.am, ipia.sci.am,
yahoo.com\}}{\thanks{The author is supported by a fellowship from Heinrich Hertz-Stiftung}}, 
Eckhard Steffen{\thanks{email: es@upb.de}} \\ 
Paderborn Institute for Advanced Studies in Computer Science and Engineering,
Paderborn University, Warburger Str. 100, 33098 Paderborn, Germany}}
\date{}
\begin{document}

\maketitle

\begin{abstract}
{\bf Abstract:} A graph $G$ is class II, if its chromatic index is at least $\Delta+1$. Let $H$ be a maximum $\Delta$-edge-colorable subgraph of $G$.
The paper proves best possible lower bounds for $\frac{|E(H)|}{|E(G)|}$, and structural properties of maximum $\Delta$-edge-colorable subgraphs. 
It is shown that every set of vertex-disjoint cycles of a class II graph with $\Delta\geq3$ can be
extended to a maximum $\Delta$-edge-colorable subgraph. Simple graphs have a maximum
$\Delta$-edge-colorable subgraph such that the complement is a matching. Furthermore, a maximum $\Delta$-edge-colorable subgraph of 
a simple graph is always class I.\\[.2cm]
{\bf Keywords:} maximum $\Delta$-edge-colorable subgraph, matching, $2$-factor, edge-chromatic number, chromatic index, class II graph
\end{abstract}

\section{Introduction}

We consider finite, undirected graphs $G=(V,E)$ with vertex set $V=V(G)$
and edge set $E=E(G)$. The graphs might have multiple edges but no loops.

Let $G$ be a graph. The length of the shortest (odd) cycle of the
underlying simple graph of $G$ is called the {\em (odd) girth} of $G$, i.e.~the girth of $G$ is always at least three.
For $X\subseteq V(G)$ we denote by $\partial_{G}(X)$
the set of edges with precisely one end in $X$. The minimum and maximum degree
of $G$ is denoted by $\delta(G)$ and $\Delta(G)$, respectively. A partial proper $t$-edge-coloring of $G$ is an assignment
of colors $1,...,t$ to some edges of $G$ such that adjacent
edges receive different colors. Let $\theta$ be a partial proper $t$-edge-coloring
of $G$. The components of the subgraph which is induced by two colors $\alpha$ and $\beta$ are 
called $(\alpha,\beta)$-Kempe-chains. That is, a Kempe-chain is either a path or an even cycle. In the case that it is a path $P$, we 
also say,  that $P$ is an $\alpha$-$\beta$-alternating path.

A partial proper $t$-edge-coloring of $G$ is called a proper
$t$-edge-coloring (or just $t$-edge-coloring) if all edges
are assigned some color. The smallest number $k$ for which $G$ has
a $k$-edge-coloring is called the {\em chromatic index} of $G$, and it is
denoted by $\chi'(G)$. A graph $G$ is {\em critical}, if 
$\chi'(G)>\Delta(G)$ and $\chi'(G-e)<\chi'(G)$, for every edge $e \in E(G)$.
If $G$ is simple, we also say that $G$ is $\Delta(G)${\em-critical}.
Clearly, $\Delta(G) \leq \chi'(G)$, and by the classical theorems of Shannon and Vizing we have the following upper bounds
for the chromatic index of a graph. 

\begin{theorem}\label{Shannon}(Shannon) Let $G$ be a graph, then $\chi'(G)\leq\lfloor\frac{3\Delta(G)}{2}\rfloor$.
\end{theorem}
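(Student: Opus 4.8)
The plan is to prove the inequality by induction on $|E(G)|$, the edgeless graph being the trivial base case. Write $\Delta=\Delta(G)$ and $k=\lfloor\frac{3\Delta}{2}\rfloor$. I would fix an edge $e=uv$; since $\Delta(G-e)\le\Delta$, the induction hypothesis supplies a proper $k$-edge-colouring $c$ of $G-e$, so it is enough to modify $c$ until some colour becomes free for $e$. For a vertex $w$ let $M(w)\subseteq\{1,\dots,k\}$ be the set of colours missing at $w$ in $c$; from $d_{G-e}(u),d_{G-e}(v)\le\Delta-1$ we get $|M(u)|,|M(v)|\ge k-\Delta+1=\lfloor\Delta/2\rfloor+1$. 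If $M(u)\cap M(v)\neq\emptyset$ we colour $e$ with a common missing colour and stop; since for $\Delta\le2$ we have $|M(u)|+|M(v)|\ge 2(\lfloor\Delta/2\rfloor+1)>k$, forcing $M(u)\cap M(v)\neq\emptyset$, the small cases are done, and from here on I assume $M(u)\cap M(v)=\emptyset$.

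The heart of the argument is a Kempe-chain step followed by a degree count. Kempe step: for all $\alpha\in M(u)$ and $\beta\in M(v)$, the $(\alpha,\beta)$-chain $P$ through $u$ is a path terminating at $v$. Indeed $\alpha$ is missing and $\beta$ present at $u$, so $P$ leaves $u$ along a $\beta$-edge; and since $\beta\in M(v)$ forbids $v$ as an interior vertex of $P$, "$P$ does not reach $v$" means "$v\notin P$", in which case interchanging $\alpha$ and $\beta$ along $P$ frees $\beta$ at $u$ while keeping $\beta$ missing at $v$, so $e$ could be coloured $\beta$. Hence, unless we are already done, every such chain terminates at $v$ — and likewise with the roles of $u$ and $v$ exchanged.

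Next I would dispose of the case that $u$ and $v$ have a common neighbour $w\notin\{u,v\}$ with $c(uw)\in M(v)$ and $c(wv)\in M(u)$; equivalently, that some chain above has length $2$. Since $uw$ is the only edge of colour $c(uw)$ at $u$, for every $\alpha\in M(u)$ the $(\alpha,c(uw))$-chain from $u$ passes through $w$ as an interior vertex (it goes on to reach $v\neq w$), which forces an $\alpha$-edge at $w$; thus $w$ is incident with one edge of each colour in $M(u)$. Symmetrically $w$ is incident with one edge of each colour in $M(v)$. As $M(u)\cap M(v)=\emptyset$, these are at least $|M(u)|+|M(v)|\ge 2(\lfloor\Delta/2\rfloor+1)\ge\Delta+1$ distinct edges at $w$, contradicting $d_G(w)\le\Delta$; so no such chain exists.

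The remaining work — and what I expect to be the real obstacle — is reducing an arbitrary configuration to this length-$2$ one. The classical device is a Vizing-type fan at $u$: a maximal sequence of distinct vertices $v=v_0,v_1,\dots,v_p$ with $c(uv_{i+1})\in M(v_i)$, built so that no $M(v_i)$ meets $M(u)$ (otherwise a fan-shift frees a colour at $u$ and finishes). Maximality forces a colour $\beta$ missing at two fan vertices, say $v_{j-1}$ and $v_p$, with $c(uv_j)=\beta$; then, with $\alpha\in M(u)$, one examines where the $(\alpha,\beta)$-chain through $u$ terminates (at $v_{j-1}$, at $v_p$, or at neither), and an appropriate fan-shift together with a single $(\alpha,\beta)$-interchange should either complete the colouring of $e$ or recreate the common-neighbour picture above, hence a vertex of degree $>\Delta$. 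Making these shifts and the interchange simultaneously valid in a multigraph, and checking that the degree they force is at least $\Delta+1$ (so that the bound obtained is $\lfloor\frac{3\Delta}{2}\rfloor$ and not something weaker), is the delicate point. For even $\Delta=2m$ the fan can be avoided: embed $G$ in a $2m$-regular multigraph, apply Petersen's $2$-factorisation theorem to write it as the union of $m$ spanning subgraphs of maximum degree $2$, and colour each with at most $3$ colours, so that $\chi'(G)\le 3m$; the odd case, though, seems to genuinely need the recolouring argument.
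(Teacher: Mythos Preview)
The paper does not actually prove Shannon's theorem; it is quoted (alongside Vizing's theorem) as classical background without proof, so there is nothing in the paper to compare your argument to.

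Judged on its own, your setup --- induction on $|E(G)|$, the missing-colour sets $M(u),M(v)$ of size at least $\lfloor\Delta/2\rfloor+1$, and the observation that every $(\alpha,\beta)$-chain with $\alpha\in M(u)$, $\beta\in M(v)$ runs from $u$ to $v$ --- is correct and standard, and your degree count at a length-$2$ chain vertex is valid. The gap is precisely where you flag it: the Vizing-fan reduction to a length-$2$ chain is only sketched, and in fact it is the wrong tool --- a full fan argument is harder than Shannon's bound itself. The classical bypass is to drop the hypothesis $c(uw)\in M(v)$ altogether. Pick any $\alpha\in M(u)$ and let $w$ be the $\alpha$-neighbour of $v$ (this always exists since $\alpha\notin M(v)$, and $w\ne u$ since $\alpha\in M(u)$). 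Then $M(w)\cap M(v)=\emptyset$ by a one-line recolouring of $vw$, and $M(w)\cap M(u)=\emptyset$ by a single further Kempe step: for $\gamma\in M(u)\cap M(w)$ and any $\beta\in M(v)$, the $(\gamma,\beta)$-chain from $w$ ends at $u$, at $v$, or at neither, and in each case one interchange places a common colour in two of $M(u),M(v),M(w)$ and finishes the extension. With $M(u),M(v),M(w)$ pairwise disjoint inside $\{1,\dots,k\}$ one gets
\[
k\ \ge\ |M(u)|+|M(v)|+|M(w)|\ \ge\ 2(k-\Delta+1)+(k-\Delta)\ =\ 3k-3\Delta+2,
\]
hence $k\le\tfrac{3\Delta-2}{2}<\lfloor\tfrac{3\Delta}{2}\rfloor$, a contradiction. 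This is exactly your length-$2$ degree count (``$w$ sees every colour of $M(u)\cup M(v)$'' is the same as ``$M(w)$ is disjoint from $M(u)\cup M(v)$''), obtained without any fan and with no parity split on $\Delta$.
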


\begin{theorem}\label{Vizing}(Vizing) Let $G$ be a graph, then 
$\chi'(G)\leq\Delta(G)+\mu(G)$, where $\mu(G)$ is the maximum 
multiplicity of an edge in $G$. 
\end{theorem}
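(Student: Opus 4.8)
The plan is to prove Theorem~\ref{Vizing} by induction on $|E(G)|$, via the classical fan argument of Vizing (and, independently, Gupta). Write $k:=\Delta(G)+\mu(G)$; the case $|E(G)|=0$ is trivial, so pick an edge $e_0=vu_0$ and, applying the induction hypothesis to $G-e_0$ (whose maximum degree and maximum edge multiplicity are at most those of $G$), fix a proper $k$-edge-colouring $\theta$ of $G-e_0$. For a vertex $w$ let $M(w)\subseteq\{1,\dots,k\}$ collect the colours absent from the edges at $w$; since $w$ has degree at most $\Delta(G)$ we always have $|M(w)|\ge\mu(G)$, and $|M(v)|,|M(u_0)|\ge\mu(G)+1$ because $v$ and $u_0$ each lost the edge $e_0$. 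It suffices to recolour so that $e_0$ too gets a colour.

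The device is a \emph{fan} at $v$: a sequence $(u_0,u_1,\dots,u_p)$ of neighbours of $v$ --- not necessarily distinct, since $G$ may have parallel edges --- together with pairwise distinct edges $e_i=vu_i$ (so $e_0$ is the uncoloured edge) such that $\theta(e_i)\in M(u_{i-1})$ for $1\le i\le p$. Starting from $(u_0)$, I would take a fan $F=(u_0,\dots,u_p)$ that cannot be prolonged. \emph{Case 1: some $u_i$ misses a colour $\alpha\in M(v)$.} Then one \emph{rotates} the prefix, setting $\theta(e_{j-1})\leftarrow\theta(e_j)$ for $j=1,\dots,i$ and leaving $e_i$ uncoloured; this stays a proper colouring (each new colour $\theta(e_j)$ on $e_{j-1}$ was absent at $u_{j-1}$, and $\theta(e_1),\dots,\theta(e_p)$ are pairwise distinct at $v$), now $\alpha$ is free at both $v$ and $u_i$, and colouring $e_i$ with $\alpha$ finishes the step. \emph{Case 2: no vertex of $F$ misses a colour of $M(v)$.} Fix $\alpha\in M(v)$ and $\beta\in M(u_p)$; then $\alpha\ne\beta$, and since $F$ is maximal and $\beta\notin M(v)$, every $\beta$-coloured edge at $v$ already belongs to $F$, so $\beta=\theta(e_m)$ for some $1\le m\le p$ and hence $\beta\in M(u_{m-1})$ as well. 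Thus $v$ misses $\alpha$ while the distinct fan edges $e_{m-1}$ and $e_p$ run to vertices missing $\beta$. Now examine the $(\alpha,\beta)$-Kempe chains starting at $v$, at $u_{m-1}$, and at $u_p$: a path has only two ends, so at least one of the chains from $u_{m-1}$ and $u_p$ avoids $v$; interchanging $\alpha$ and $\beta$ along it keeps $\alpha$ missing at $v$ and makes the relevant fan vertex miss $\alpha$, so that rotating the corresponding prefix of $F$ frees the edge to that vertex for the colour $\alpha$. In both cases we obtain a proper $k$-edge-colouring of $G$, completing the induction.

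The step I expect to be the real obstacle is Case~2, and within it the bookkeeping forced by parallel edges. One must check that the Kempe interchange does not disturb the colour $\alpha$ at $v$ and does not spoil the fan condition on the prefix that is subsequently rotated, and one must choose that prefix according to which chain was altered; worse, in a multigraph the fan vertices $u_i$ may coincide (the fan can exhibit a \emph{whirl}), so $u_{m-1}$ and $u_p$ need not be distinct and the two-endpoint count on the Kempe chains must be refined. It is precisely here that the slack $|M(w)|\ge\mu(G)$, rather than the bare $|M(w)|\ge1$ that suffices in the simple case $k=\Delta+1$, is used: it guarantees a spare colour at each relevant vertex, so that any fan which genuinely cannot be prolonged is forced into the rigid configuration the Kempe-chain argument rules out. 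Reintroducing the $\mu(G)$ parallel edges of one bundle one at a time on top of a $\Delta$-colouring of the underlying simple graph only re-runs the same analysis, so the direct fan argument above seems the economical route.
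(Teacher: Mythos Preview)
The paper does not prove Theorem~\ref{Vizing}; it is quoted without proof as a classical result of Vizing and used only as background for the rest of the paper. There is therefore no proof in the paper to compare your proposal against.

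For what it is worth, your sketch is the standard Vizing fan argument and is sound in outline. You correctly locate the one real difficulty for multigraphs in Case~2: when parallel edges force $u_{m-1}=u_p$, the three Kempe-path endpoints $v,u_{m-1},u_p$ collapse to two and the pigeonhole step, as you have written it, no longer applies. Your closing paragraph flags this but does not actually carry out the repair; the usual remedies are either to rotate the fan first and argue separately about the $(\alpha,\beta)$-chain in the two rotated configurations (it cannot end at $v$ in both), or to exploit $|M(u_p)|\ge\mu(G)$ against the fact that at most $\mu(G)$ fan edges can run to any single neighbour of $v$, so that $\beta$ can be chosen to avoid the coincidence. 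As a plan your proposal is the right one; as a proof it stops just short of the one genuinely delicate point, which you yourself identify.
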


A graph $G$ with $\chi'(G) = \Delta(G) = \Delta$ is {\em class I}, 
otherwise it is {\em class II}. 
There are long standing open conjectures on class II graphs, cf. \cite{Jensen_Toft}. 
It is a notorious difficult open problem to characterize class II graphs or even to obtain 
some insight into their structural properties. This paper focuses on the $\Delta$-edge-colorable part
of graphs. A subgraph $H$ of $G$ is called {\em maximum} $\Delta$-edge-colorable, if it is $\Delta$-edge-colorable
and contains as many edges as possible. The fraction $|E(H)|/|E(G)|$ is the subject of many papers,
e.g. lower bounds are proved for cubic, subcubic or 4-regular graphs, \cite{A_Haas,part1,Rizzi_2009}. 
One aim of this paper is to prove a general best possible lower bound for all graphs.

Let $H$ be a maximum $\Delta$-edge-colorable
subgraph of $G$, which is properly colored with colors $1,...,\Delta$.
Usually, we will refer to edges of $E(G)\backslash E(H)$ as uncolored
edges. For a vertex $v$ of $G$ let $C(v)$ be the set of colors
that appear at $v$, and $\overline{C}(v)=\{1,...,\Delta\}\backslash C(v)$
be the set of colors which are missing at $v$.

Let $e=(v,u)\in E(G)\backslash E(H)$ be an uncolored edge, $\alpha\in\overline{C}(u)$,
$\beta\in\overline{C}(v)$. Since $H$ is a maximum $\Delta$-edge-colorable
subgraph of $G$, we have that $\alpha\in C(v)$ and $\beta\in C(u)$.
Consider the $\alpha$-$\beta$-alternating path $P$ starting from the
vertex $v$. Again, since $H$ is a maximum $\Delta$-edge-colorable
subgraph of $G$, the path $P$ ends in $u$. Thus $P$ is an even
path, which together with the edge $e$ forms an odd cycle. We will
denote this cycle by $C_{\alpha,\beta,H}^{e}$. If the subgraph $H$
is fixed, then we will shorten the notation to $C_{\alpha,\beta}^{e}$.

Kempe chains forming an odd cycle together with an uncolored edge $e$, $C_{\alpha,\beta}^e$, play a central role in the study of cubic graphs \cite{SteffenClassification,Steffen}. The second aim of this paper is to generalize
some of these results to arbitrary graphs, and to investigate the
maximum $\Delta$-edge-colorable subgraphs. We show that any set of
vertex-disjoint cycles of a graph $G$ with $\Delta(G)\geq3$ can be extended to a maximum
$\Delta$-edge-colorable subgraph of $G$. In particular, any $2$-factor
of a graph with maximum degree at least three can be extended to such a subgraph.

Let $\phi$ be a $\chi'(G)$-edge-coloring of a graph $G$ with
$\chi'(G)=\Delta(G)+k$ ($k\geq1$), and $r'_{\phi}(G)=\min\sum_{j=1}^{k}|\phi^{-1}(i_{j})|$.
We define $r'_{e}(G)=\min_{\phi}r'_{\phi}(G)$ as the minimum size
of the union of $k$ color-classes in a $\chi'(G)$-edge-coloring
of $G$. Let $r_{e}(G)$ denote the minimum
number of edges that should be removed from $G$ in order
to obtain a graph $H$ with $\chi'(H)=\Delta(G)$.
Clearly, $r_{e}(G)=|E(G)|-|E(H)|$, where $H$ is a maximum $\Delta$-edge-colorable
subgraph of $G$. In  \cite{part2} it is shown that the complement of any maximum
$3$-edge-colorable subgraph of a cubic graph is a matching, and hence $r_{e}(G)=r'_{e}(G)$
for cubic graphs.  This paper generalizes this result to simple graphs. 
We further prove some bounds for the vertex degrees of a maximum $\Delta(G)$-edge-colorable
subgraph $H$.

\section{Maximum $\Delta$-edge-colorable subgraphs: Cycles}

The key property of cycles corresponding to uncolored edges in cubic graphs that is
used in \cite{SteffenClassification,Steffen} is their vertex-disjointness.
As the example from Figure \ref{FatTriangleExample} shows, they can
have even common edges in the general case. To see this, consider
the graph $G$ and its maximum $\Delta(G)$-edge-colorable subgraph
$H$ from Figure \ref{FatTriangleExample}. Note that $\overline{C}(a)=\{4\}$,
$\overline{C}(b)=\{3\}$, $\overline{C}(c)=\{1,2\}$, 
and hence $E(C_{1,3}^{(b,c)})\cap E(C_{1,4}^{(a,c)})\neq\emptyset$.

\begin{center}
\begin{figure}[ht]
\begin{center}
\includegraphics[height=10pc, width=13pc]{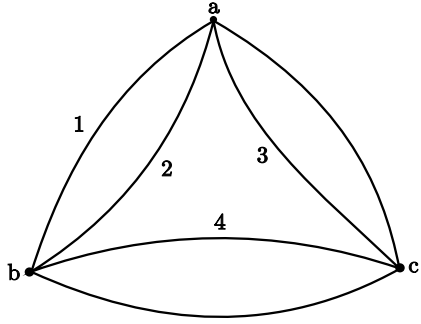}\\
\caption{Cycles of uncolored edges may intersect.}\label{FatTriangleExample}
\end{center}
\end{figure}
\end{center}

Despite this, it turns out that, as Theorem \ref{assignment} demonstrates
below, the edge-disjointness of the cycles can be preserved. For the
proof we need the following Lemma, where we implicitly assume that the maximum $\Delta(G)$-edge-colorable
subgraph $H$ of a graph $G$ is colored with $\Delta=\Delta(G)$
colors. Before we proceed, let us note that if $C_{\alpha,\beta}^{e}$ and $C_{\gamma,\delta}^{e'}$ are two cycles corresponding to two different uncolored edges $e$ and $e'$ with respect to a maximum $\Delta(G)$-edge-colorable subgraph, then $E(C_{\alpha,\beta}^{e})\cap E(C_{\gamma,\delta}^{e'})\neq \emptyset$ implies $\{\alpha,\beta\}\cap \{\gamma,\delta\} \neq \emptyset$. The Lemma, that we are going to prove, describes the placement of the edges of $e$ and $e'$.

\begin{lemma}\label{CyclesIntersection} Let $H$ be any maximum
$\Delta(G)$-edge-colorable subgraph of a graph $G$, and $e$, $e'$
be two uncolored edges. If $E(C_{\alpha,\beta}^{e})\cap E(C_{\alpha,\gamma}^{e'})\neq\emptyset$,
then there is a vertex $v$ such that $e$ and $e'$ are
incident to $v$ and $\alpha \in \overline{C}(v)$. \end{lemma}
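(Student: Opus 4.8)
The plan is to analyze the structure of the two odd cycles $C_{\alpha,\beta}^{e}$ and $C_{\alpha,\gamma}^{e'}$ at a common edge. Recall that $C_{\alpha,\beta}^{e}$ consists of the edge $e=(v_1,u_1)$ together with an even $\alpha$-$\beta$-alternating path $P$ from $v_1$ to $u_1$ in $H$, where $\alpha\in\overline{C}(u_1)$, $\beta\in\overline{C}(v_1)$; and similarly $C_{\alpha,\gamma}^{e'}$ consists of $e'=(v_2,u_2)$ together with an even $\alpha$-$\gamma$-alternating path $P'$ from $v_2$ to $u_2$, with $\alpha\in\overline{C}(u_2)$, $\gamma\in\overline{C}(v_2)$. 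Let $f$ be an edge in $E(P)\cap E(P')$ (so $f\in E(H)$). Since $f$ lies on an $\alpha$-$\beta$-alternating path and on an $\alpha$-$\gamma$-alternating path, its color is either $\alpha$, or both $\beta$ and $\gamma$; in the latter case $\beta=\gamma$, so in all cases the color of $f$ is $\alpha$ unless $\beta = \gamma$. I would first dispose of the degenerate possibility, and then assume $f$ has color $\alpha$.

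Next I would look at the maximal common subpath $Q$ of $P$ and $P'$ containing $f$. Because both $P$ and $P'$ are $\alpha$-alternating (sharing the color $\alpha$), and an edge colored $\alpha$ at a vertex determines the rest of the Kempe chain, the two paths can only diverge at a vertex where one of them switches to $\beta$ while the other switches to $\gamma$, or at an endpoint of one of the paths. So the endpoints of $Q$ are each either an endpoint of $P$, an endpoint of $P'$, or a vertex where the non-$\alpha$ colors $\beta$ and $\gamma$ force a split. The key observation is that the colors missing at the four endpoints $v_1,u_1,v_2,u_2$ are $\beta,\alpha,\gamma,\alpha$ respectively, and the non-endpoint internal vertices of an $\alpha$-$\beta$-alternating path have both $\alpha$ and $\beta$ present. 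I would chase these alternation constraints along $Q$ to conclude that at least one endpoint of $Q$ is simultaneously an endpoint of $P$ and an endpoint of $P'$ — i.e. it is one of $v_1,u_1$ and one of $v_2,u_2$. Since $\alpha\in\overline{C}(u_1)\cap\overline{C}(u_2)$ while $\alpha\notin\overline{C}(v_1)\cup\overline{C}(v_2)$, such a shared endpoint must be $v_1=v_2$ (a $\beta$-end meeting a $\gamma$-end, both having color $\alpha$ present but... ) — here I have to be slightly careful, and the correct conclusion is that the shared endpoint $w$ has $\alpha\in\overline{C}(w)$, hence $w=u_1=u_2$, and then $e$ and $e'$ are both incident to $w$, giving the claimed vertex $v:=w$.

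The main obstacle I anticipate is the bookkeeping in the divergence argument: ruling out the case where $Q$ is a proper internal subpath of both $P$ and $P'$ with neither end of $Q$ an endpoint of either path. At such a hypothetical internal split vertex $x$, one path continues with an $\alpha$-colored edge and the other with a $\beta$- or $\gamma$-colored edge, but since the shared edge $f$ enters $x$ with a definite color (say $\alpha$), both paths must leave $x$ with the \emph{same} non-$\alpha$ color (the unique non-$\alpha$ color among $\beta,\gamma$ present at $x$), forcing $\beta=\gamma$ or forcing $Q$ to extend further — contradicting maximality of $Q$ unless we have hit an endpoint. Making this "can't split internally" step fully rigorous, together with handling the $f$-colored-$\beta(=\gamma)$ degenerate branch, is where the real work lies; the rest is then a short identification of which of the four named endpoints coincide using the membership pattern of $\alpha$ in the missing-color sets.
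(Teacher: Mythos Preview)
Your approach has a genuine gap: you never invoke the hypothesis that $H$ is a \emph{maximum} $\Delta$-edge-colorable subgraph, and that hypothesis is indispensable. Purely structural reasoning about Kempe chains cannot force the conclusion, because two Kempe chains on colour sets $\{\alpha,\beta\}$ and $\{\alpha,\gamma\}$ with $\beta\neq\gamma$ can perfectly well share an $\alpha$-edge and nothing else. Concretely, take a vertex $c$ joined to $d$ by an edge of colour $\alpha$, with a $\beta$-edge and a $\gamma$-edge at each of $c$ and $d$ going off to four distinct vertices; then the $\alpha$-$\beta$-chain and the $\alpha$-$\gamma$-chain through $cd$ have no common endpoint. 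Your ``can't split internally'' step asserts that at a split vertex $x$ reached by an $\alpha$-edge, both paths must leave along \emph{the same} non-$\alpha$ edge; but there is no reason only one of $\beta,\gamma$ is present at $x$, so $P$ may leave along a $\beta$-edge while $P'$ leaves along a $\gamma$-edge. Thus the chain-chasing cannot conclude that an endpoint of $P$ coincides with an endpoint of $P'$, and the subsequent identification of which endpoints coincide (your $u_1=u_2$ discussion) never gets off the ground.

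The paper's proof proceeds instead by a recolouring/contradiction argument that uses maximality twice. First, assuming $e$ and $e'$ are not adjacent, it picks a shared $\alpha$-edge $e''$ closest to $e'$ along $C_{\alpha,\gamma}^{e'}$, takes the subpath $P$ of $C_{\alpha,\beta}^{e}$ from $e$ to one end of $e''$ and the (disjoint from $C_{\alpha,\beta}^{e}$) subpath $P'$ of $C_{\alpha,\gamma}^{e'}$ from $e'$ to the other end, and shifts colours along $P+e$ so that $e$ becomes coloured and $e''$ becomes uncoloured; because $P'$ was untouched, one can then colour $e'$ as well, producing a $\Delta$-edge-colourable subgraph with one more edge than $H$ --- contradicting maximality. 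Second, once $e$ and $e'$ share a vertex $v$, if $\alpha\in C(v)$ then interchanging $\alpha$ and $\beta$ on $C_{\alpha,\beta}^{e}$ frees colour $\alpha$ at $v$ and allows $e'$ to be coloured $\alpha$, again contradicting maximality. Both conclusions you want are thus obtained from the extremality of $H$, not from the geometry of the Kempe chains.
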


\begin{proof} Let $e$ and $e'$ be two uncolored edges with respect to $H$, and
$E(C_{\alpha,\beta}^{e})\cap E(C_{\alpha,\gamma}^{e'})\neq \emptyset$. If $\beta=\gamma$, then $e$ and $e'$ are parallel, therefore the statement of the Lemma is trivial. Thus, we can assume that $\beta\neq\gamma$, and therefore every edge $e''\in E(C_{\alpha,\beta}^{e})\cap E(C_{\alpha,\gamma}^{e'})$ is colored with color $\alpha$. 

We first show that $e$ and $e'$ are adjacent. Assume to the contrary that
this is not the case. Let $P'$ be a subpath of $C_{\alpha,\gamma}^{e'}$ that connects $e'$ to $(v,w)=e''\in E(C_{\alpha,\beta}^{e})\cap E(C_{\alpha,\gamma}^{e'})$, where $e''$ is chosen so that $E(P')\cap E(C_{\alpha,\beta}^{e})\cap E(C_{\alpha,\gamma}^{e'})=\emptyset$. 
Note, that this is always possible. Suppose that $P'$ connects $e'$ to the vertex $w$. Let $P$ be a subpath of $C_{\alpha,\beta}^{e}$ that connects $e$ to the vertex $v$, and does not pass through $w$.

There are edges $f\in E(P+e)$,
$f'\in E(P'+e')$ which are adjacent to $e''$. If $f\neq e$, then $f$ is colored with
color $\beta$, and similarly, if $f'\neq e'$, $f'$ is colored with color $\gamma$. Moreover, $f$ and $f'$ do not share a vertex, $f$ is
incident to $v$, and $f'$ to $w$.

Now recolor $E(P)+e$ by leaving
$e''$ uncolored and color the remaining edges with colors $\alpha$,
$\beta$ alternately, to obtain another maximum $\Delta$-edge-colorable
subgraph $H'$of $G$.
By the choice of $P$ and $P'$, no edge of the subpath $P'$ of $C_{\alpha,\gamma}^{e'}$
is involved in the recoloring process. Thus $P'+e'$ can be colored
with colors $\alpha$, $\gamma$ and we obtain a $\Delta$-edge-colorable
subgraph $H^{*}$ with $|E(H^{*})|>|E(H)|$, contradicting our choice
of $H$. Thus $e$ and $e'$ are adjacent. 

If $e$ and $e'$ are parallel, then our statement follows easily. It remains to consider the case when $e$ and $e'$ share precisely
one vertex, say $v$. Assume to the contrary that $\alpha\not\in\overline{C}(v)$,
i.e. $\alpha\in C(v)$ . Then, interchanging colors $\alpha$ and $\beta$ in $C_{\alpha,\beta}^{e}$
allows us to color $e'$ with color $\alpha$, contradicting the maximality
of $H$. Hence it follows that $\alpha\in\overline{C}(v)$, and the statement
is proved.$\square$ \end{proof}

\begin{theorem}\label{assignment}Let $H$ be any maximum $\Delta(G)$-edge-colorable
subgraph of a graph $G$, and let $E(G)-E(H)=\{e_{i}=(u_{i},v_{i})|1\leq i\leq n\}$
be the set of uncolored edges. Assume that $H$ is properly edge-colored
with colors $1,\ldots,\Delta(G)$. Then there is an assignment of
colors $\alpha_{1}\in\overline{C}(u_{1}),\beta_{1}\in\overline{C}(v_{1}),\ldots,\alpha_{n}\in\overline{C}(u_{n}),\beta_{n}\in\overline{C}(v_{n})$
to the uncolored edges, such that $
E(C_{\alpha_{i},\beta_{i}}^{e_{i}})\cap E(C_{\alpha_{j},\beta_{j}}^{e_{j}})=\emptyset$,
for all $1\leq i<j\leq n$. \end{theorem}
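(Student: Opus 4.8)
The plan is to build the assignment greedily, processing the uncolored edges one at a time, and to use Lemma~\ref{CyclesIntersection} to show that a bad intersection can always be repaired by changing one color at a cost that is ``local'' to a vertex. Concretely, suppose we have already assigned colors $\alpha_1,\beta_1,\ldots,\alpha_{k-1},\beta_{k-1}$ so that the cycles $C^{e_1}_{\alpha_1,\beta_1},\ldots,C^{e_{k-1}}_{\alpha_{k-1},\beta_{k-1}}$ are pairwise edge-disjoint, and we now wish to choose $\alpha_k\in\overline C(u_k)$, $\beta_k\in\overline C(v_k)$. By Lemma~\ref{CyclesIntersection}, for each $j<k$, if $C^{e_k}_{\alpha_k,\beta_k}$ meets $C^{e_j}_{\alpha_j,\beta_j}$ then $e_k$ and $e_j$ share a vertex $v$ and the common missing color at $v$ is either $\alpha_k=\alpha_j$, or (after rewriting the unordered pair $\{\alpha_j,\beta_j\}$) $\beta_k=\alpha_j$, etc.\ --- in all cases the intersection forces one of the two colors of $e_k$ to equal one of the two colors of $e_j$, \emph{and} that shared color to be missing at a common endpoint. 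So the only way $C^{e_k}_{\alpha_k,\beta_k}$ can collide with a previously placed cycle is through a color that is missing at $u_k$ or at $v_k$ and is also used as a cycle-color of an earlier edge incident to that same vertex.

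The key counting step is then: at the vertex $u_k$, each previously processed edge $e_j$ incident to $u_k$ ``blocks'' at most one missing color of $u_k$ (namely the one of its two cycle-colors that lies in $\overline C(u_k)$, if any), and similarly at $v_k$. If $d$ denotes the number of uncolored edges incident to $u_k$ (other than $e_k$), then $|\overline C(u_k)|=d+1$ (since $\deg_H(u_k)=\deg_G(u_k)-d-1$ and $\overline C(u_k)$ has size $\Delta-\deg_H(u_k)$), so there is always at least one color in $\overline C(u_k)$ not blocked by any earlier incident uncolored edge; pick that as $\alpha_k$. The same argument at $v_k$ gives an unblocked $\beta_k\in\overline C(v_k)$. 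With these choices, Lemma~\ref{CyclesIntersection} rules out every possible collision: any $e_j$ not adjacent to $e_k$ cannot collide at all, and any $e_j$ adjacent to $e_k$ at a vertex $v\in\{u_k,v_k\}$ would need the collision color to be a color of $e_j$ missing at $v$, which we have deliberately avoided. Hence $C^{e_k}_{\alpha_k,\beta_k}$ is edge-disjoint from all earlier cycles, and the induction proceeds. (One must also check $\alpha_k\ne\beta_k$; this holds automatically since $\alpha_k\in C(v_k)$, $\beta_k\in\overline C(v_k)$ by maximality of $H$, so they are distinct.)

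The main obstacle I anticipate is making the blocking count airtight when $e_k$ and an earlier edge $e_j$ are \emph{parallel} to $u_k$ and $v_k$, or when several earlier uncolored edges at the same vertex are mutually parallel --- Lemma~\ref{CyclesIntersection} handles parallel edges as a trivial/degenerate case, so one has to be careful that a parallel earlier edge still blocks at most one missing color at each shared endpoint, and that the equality $|\overline C(u_k)|=d+1$ correctly accounts for multiplicities. A second delicate point is the asymmetry in the hypothesis of Lemma~\ref{CyclesIntersection}: it is stated for pairs of cycles sharing a color in a specific ``slot.'' Since the unordered pair of colors of a cycle can be written either way, one should first record the clean consequence ``$E(C^e_{\alpha,\beta})\cap E(C^{e'}_{\gamma,\delta})\ne\emptyset \Rightarrow e,e'$ incident to a common vertex $v$, and the common color in $\{\alpha,\beta\}\cap\{\gamma,\delta\}$ lies in $\overline C(v)$'' (which is exactly the remark preceding the lemma combined with the lemma itself), and then the greedy argument above goes through cleanly. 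Apart from these bookkeeping issues, the argument is a straightforward induction on $n$.
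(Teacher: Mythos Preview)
Your argument is correct and rests on exactly the same two ingredients as the paper's proof: Lemma~\ref{CyclesIntersection} to localize any collision to a shared endpoint and a shared missing color there, together with the pigeonhole count $|\overline C(u_k)|\geq d+1$ (note: this is $\geq$, not $=$, since $|\overline C(u_k)|=\Delta-d_H(u_k)=\Delta-d_G(u_k)+d+1$) against the at most $d$ earlier uncolored edges at $u_k$. The paper organizes the same idea as an induction that removes one uncolored edge $e_k$, applies the hypothesis to $G-e_k$, and then reinstates $e_k$ with an unblocked pair $(\alpha_k,\beta_k)$; because this requires $\Delta(G-e_k)=\Delta(G)$, the paper needs a separate Case~2 (all uncolored edges meeting at a single maximum-degree vertex) when no such removal is possible. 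Your forward greedy construction never deletes an edge from $G$, so it sidesteps that case distinction entirely; in particular Case~2 of the paper is absorbed into the same counting argument. Your anticipated obstacles are handled exactly as you outline: a parallel earlier edge $e_j$ blocks one color at $u_k$ and one at $v_k$, and it is counted once on each side in the respective $d$'s, so the pigeonhole still goes through; and since maximality of $H$ forces $\overline C(u_k)\cap\overline C(v_k)=\emptyset$, each earlier $e_j$ contributes exactly one blocked color at each shared endpoint, making the ``clean consequence'' of Lemma~\ref{CyclesIntersection} you record sufficient.
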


\begin{proof} We prove the statement by induction on the number $n$
of uncolored edges.

If $n=1$, then the statement is trivial. 

Let $n \geq 2$, and $G$ be a graph with $|E(G)-E(H)|=n$. We
need to consider two cases.

Case 1: There is a $k$ ($1\leq k\leq n$), such that $\Delta(G-e_{k})=\Delta(G)$.
Then, $H$ is a maximum $\Delta(G-e_{k})$-edge-colorable
subgraph of a graph $G-e_{k}$. Thus by the induction hypothesis, there
are $\alpha_{1}\in\overline{C}(u_{1}),\beta_{1}\in\overline{C}(v_{1}),\ldots,\alpha_{k-1}\in\overline{C}(u_{k-1}),\beta_{k-1}\in\overline{C}(v_{k-1}),\alpha_{k+1}\in\overline{C}(u_{k+1}),\beta_{k+1}\in\overline{C}(v_{k+1}),\ldots,\alpha_{n}\in\overline{C}(u_{n}),\beta_{n}\in\overline{C}(v_{n})$
such that \[
E(C_{\alpha_{i},\beta_{i}}^{e_{i}})\cap E(C_{\alpha_{j},\beta_{j}}^{e_{j}})=\emptyset,
\mbox{ for all } 1\leq i<j\leq n \hspace{.5cm} (i,j\neq k). \] 
Define
$I_{k}=\{e_{i}|i\neq k\textrm{, and }e_{i} \textrm{ is incident to }u_{k}\}$, and $J_{k}=\{e_{i}|i\neq k\textrm{, and }e_{i}\textrm{ is incident to }v_{k}\}$.
Then 
$|\overline{C}(u_{k})|\geq 1+|I_{k}|$, $|\overline{C}(v_{k})|\geq 1+|J_{k}|$,
and  hence there are $\alpha_{k}\in\overline{C}(u_{k}),\beta_{k}\in\overline{C}(v_{k})$,
such that 
$\alpha_{k}\neq\alpha_{i},e_{i}\in I_{k}$ and $\beta_{k}\neq\beta_{j},e_{j}\in J_{k}$.
Choose the colors $\alpha_{k},\beta_{k}$ to create $C_{\alpha_{k},\beta_{k}}^{e_{k}}$, then Lemma
\ref{CyclesIntersection} implies that
$ E(C_{\alpha_{i},\beta_{i}}^{e_{i}})\cap E(C_{\alpha_{j},\beta_{j}}^{e_{j}})=\emptyset$, 
for all $1\leq i<j\leq n$.

Case 2: For each $k$ ($1\leq k\leq n$) we have $\Delta(G-e_{k})<\Delta(G)$.
Then, there is a vertex $v$ of maximum degree $\Delta(G)$, such that
$v$ is incident to all uncolored edges $e_{1},...,e_{n}$ (Figure
\ref{AllUncoloreds}).
Thus, without loss of generality, we can assume that $v=v_{1}=...=v_{n}$.
Let $u^{(1)},...,u^{(l)}$ ($l \leq n$) be all the neighbors of $v$ such that
$v$ and $u^{(i)}$ are connected by $k_{i}$ uncolored edges, $k_{i}\geq1$ , $i=1,...,l$
(Figure \ref{AllUncoloreds}). Clearly, 
$k_{1}+...+k_{l}=n$.
We may assume that the edges $e_{1},...,e_{k_{1}}$ are incident
to $u^{(1)}$, the edges $e_{k_{1}+1},...,e_{k_{1}+k_{2}}$ are incident to
$u^{(2)}$,..., the edges $e_{k_{1}+...+k_{l-1}+1},...,e_{n}$ are incident to
$u^{(l)}$. Moreover, let $\overline{C}(v)=\{c_{1},...,c_{n}\}$ and
$\overline{C}(u^{(j)})=\{c_{1}^{(j)},...,c_{t_{j}}^{(j)}\},1\leq j\leq l$.

\begin{center}
\begin{figure}[ht]
\begin{center}
\includegraphics[height=13pc, width=20pc]{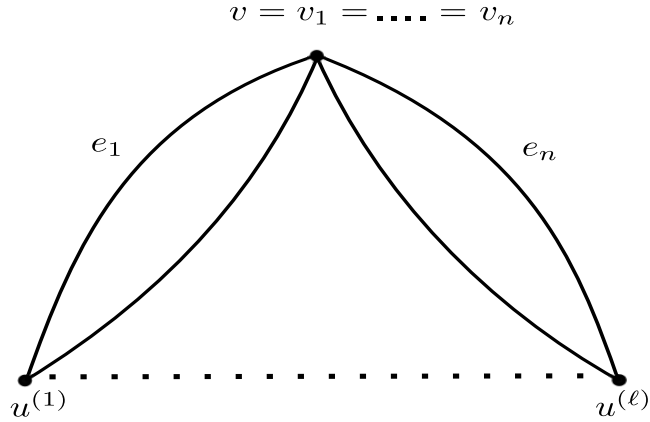}\\
\caption{All uncolored edges are incident to the vertex $v$.}
\label{AllUncoloreds}
\end{center}
\end{figure}
\end{center}

Note that $t_{j}\geq k_{j}$, for $j=1,...,l$. Define
\begin{eqnarray*}
\beta_{1}=c_{1},...,\beta_{n}=c_{n},\\
\alpha_{1}=c_{1}^{(1)},...,\alpha_{k_{1}}=c_{k_{1}}^{(1)},\\
\alpha_{k_{1}+1}=c_{1}^{(2)},...,\alpha_{k_{1}+k_{2}}=c_{k_{2}}^{(2)},\\
\dots\\
\alpha_{k_{1}+...+k_{l-1}+1}=c_{1}^{(l)},...,\alpha_{n}=c_{k_{l}}^{(l)}.\end{eqnarray*}
 Now, it is not hard to see that the choice of $\alpha_{1},\beta_{1},...,\alpha_{n},\beta_{n}$
and Lemma \ref{CyclesIntersection} imply that 
$E(C_{\alpha_{i},\beta_{i}}^{e_{i}})\cap E(C_{\alpha_{j},\beta_{j}}^{e_{j}})=\emptyset$,
for all $1\leq i<j\leq n$. $\square$
\end{proof}

The next Theorem generalizes the result of \cite{part2} that any
$2$-factor of a cubic graph can be extended to a maximum $3$-edge-colorable
subgraph to arbitrary graphs.

\begin{theorem}\label{CycleSystemExtension}Let $\overline{F}$ be any
set of vertex-disjoint cycles of a graph $G$ with $\Delta=\Delta(G)\geq3$. Then there is a maximum
$\Delta$-edge-colorable subgraph $H$ of $G$, such that $E(\overline{F})\subseteq E(H)$.
\end{theorem}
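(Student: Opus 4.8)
The plan is to argue by contradiction from an extremal choice of the colorable subgraph. Among all maximum $\Delta$-edge-colorable subgraphs of $G$, each colored with the $\Delta$ colors $1,\dots,\Delta$, I would fix one, $H$, for which $|E(H)\cap E(\overline F)|$ is largest, and assume $E(\overline F)\not\subseteq E(H)$. Then some $e=(u,v)\in E(\overline F)$ is uncolored; let $C_0$ be the unique cycle of $\overline F$ through $e$. The basic tool is a ``sliding'' move: for any uncolored edge $e$ and any $\alpha\in\overline C(u)$, $\beta\in\overline C(v)$, the odd cycle $C_{\alpha,\beta}^{e}$ is an $\alpha$-$\beta$-alternating path plus $e$, and that alternating path is a full component of the $(\alpha,\beta)$-subgraph; hence for any edge $f$ of $C_{\alpha,\beta}^{e}$ one may uncolor that cycle and recolor $C_{\alpha,\beta}^{e}-f$ with $\alpha,\beta$ alternately, obtaining a new maximum $\Delta$-edge-colorable subgraph whose only uncolored edge on $C_{\alpha,\beta}^{e}$ is $f$.

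By extremality, every edge of $C_{\alpha,\beta}^{e}$ must lie in $E(\overline F)$: otherwise slide the uncolored edge onto an edge off $\overline F$, which colors $e\in E(\overline F)$ and strictly increases $|E(H)\cap E(\overline F)|$. Since $C_{\alpha,\beta}^{e}$ is a simple odd cycle and $E(\overline F)$ is a disjoint union of cycles, $C_{\alpha,\beta}^{e}=C_0$ for every admissible $\alpha,\beta$; in particular $C_0$ is odd, and $C_0-e$ is $\alpha$-$\beta$-alternating with respect to $H$. As the colors on $C_0-e$ are fixed, letting $\alpha$ range over $\overline C(u)$ and $\beta$ over $\overline C(v)$ forces $|\overline C(u)|=|\overline C(v)|=1$, so $\deg_G(u)=\deg_G(v)=\Delta$ and $e$ is the only uncolored edge at $u$ and at $v$. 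Finally, applying the sliding move repeatedly around $C_0$ (the two admissible colors stay $\{\alpha,\beta\}$ and every intermediate cycle is again $C_0\subseteq E(\overline F)$) and repeating the previous argument at each position shows that \emph{every} vertex of $C_0$ has degree exactly $\Delta$.

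This reduces the problem to the situation where $C_0$ is an odd cycle, all of whose vertices have degree $\Delta\geq 3$, carrying exactly one uncolored edge $e$ while $C_0-e$ is two-colored by $\alpha,\beta$. Here $\Delta\geq 3$ gives a third color $\gamma$, and each vertex of $C_0$ has a $\gamma$-colored edge leaving $C_0$. I would now use $\alpha$-$\gamma$ and $\beta$-$\gamma$ Kempe chains from $u$ and from $v$ — neither of which can end at the other endpoint, since $u$, $v$ each miss only one color — to recolor $H$ into a maximum $\Delta$-edge-colorable subgraph that either admits an odd cycle through $e$ not contained in $E(\overline F)$ (so that the sliding move applies), or whose unique uncolored edge is some $f\notin E(\overline F)$ with $C_0\subseteq E(H)$, contradicting extremality. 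I expect this last step to be the main obstacle: with every edge of $G$ retained except one on $C_0$, the two-colored pattern forced around the odd cycle clashes at $e$ (a genuine parity obstruction, visible already for $\Delta=2$), so the argument has to exploit the extra $\Delta-2\geq 1$ edges at each vertex of $C_0$ together with the freedom of a maximum $\Delta$-edge-colorable subgraph to omit an edge \emph{off} $C_0$; making this re-routing explicit amounts to a Vizing-type fan/Kempe argument localized near $u$ and $v$, and is where $\Delta\geq 3$ is essential. Everything preceding it — the sliding move and the degree analysis (in the spirit of Lemma~\ref{CyclesIntersection}) — is comparatively routine.
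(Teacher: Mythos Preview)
Your reduction is exactly the paper's: choose $H$ maximizing $|E(H)\cap E(\overline F)|$; if some $e=(u,v)\in E(\overline F)$ is uncolored then every $C_{\alpha,\beta}^e$ must lie in $\overline F$ and hence coincide with the cycle $C_0$ of $\overline F$ through $e$; this forces $|\overline C(u)|=|\overline C(v)|=1$, and sliding the uncolored edge around $C_0$ gives that every vertex of $C_0$ has degree $\Delta$ in $G$. Up to this point you and the paper agree verbatim.

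The gap is precisely where you locate it, and your proposed fix is not what the paper does. A localized Kempe/fan argument at $u$ and $v$ is plausible (indeed, if the $(\alpha,\gamma)$-chain at $u$ avoids $v$, then after the swap the $\gamma$-edge at $v$ is still off $C_0$ and lies on $C_{\gamma,\beta}^e$, so one slide finishes), but when that chain passes through $v$ or re-enters $C_0$ along its $\alpha$-edges the bookkeeping becomes a genuine case analysis that you have not carried out.

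The paper sidesteps this with a global parity argument over all of $C_0$. Erase the colors on $C_0$, obtaining a partial coloring $\theta$; at every vertex of $C_0$ the remaining $\Delta-2$ colored edges carry exactly the colors other than $\alpha,\beta$. Fix any third color $\gamma$ and look at the $(\gamma,\beta)$-subgraph of $\theta$: each vertex of $C_0$ sees $\gamma$ and misses $\beta$, hence is a path endpoint there. Since $|V(C_0)|$ is odd, some maximal $(\gamma,\beta)$-path $P_w$ starting at $w\in V(C_0)$ has its other endpoint outside $V(C_0)$ (and $P_w$ meets $V(C_0)$ only in $w$). Swap colors along $P_w$, so $w$ now misses $\gamma$; pick $g=(w,z)\in E(C_0)$, uncolor the $\gamma$-edge at $z$ (which is off $C_0$, hence off $\overline F$ by vertex-disjointness), color $g$ with $\gamma$, and color the even path $C_0-g$ alternately with $\alpha,\beta$. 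The resulting maximum $\Delta$-edge-colorable subgraph contains all of $C_0$ and has its one new uncolored edge outside $\overline F$, contradicting the extremal choice of $H$. This odd-cycle parity trick over the whole of $C_0$---rather than chains anchored only at $u,v$---is the missing idea.
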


\begin{proof}Let $\Delta=\Delta(G)\geq3$. For $\overline{F}$ consider a maximum
$\Delta$-edge-colorable subgraph $H$ of $G$ with 
$|E(\overline{F})\cap E(H)|\textrm{ is maximum}$.

We will show that $E(\overline{F})\subseteq E(H)$. Assume to the contrary that there is an edge 
$e=(u,v)\in E(\overline{F})$, such that $e$ does not belong to $E(H)$.
Let us assume that $H$ is properly colored with colors $1,2,...,\Delta$.

Case 1: There are $\alpha\in C(u)\backslash C(v)$ and $\beta\in C(v)\backslash C(u)$,
such that $C_{\alpha,\beta}^{e}$ contains an edge $f$ that does
not belong to a cycle of $\overline{F}$.
Consider a proper partial $\Delta$-edge-coloring of $G$ obtained
from the coloring of $H$ by shifting the colors on the cycle $C_{\alpha,\beta}^{e}$
and leaving the edge $f$ uncolored. The new partial $\Delta$-edge-coloring
corresponds to a maximum $\Delta$-edge-colorable subgraph $H'$ of
$G$ with 
$|E(\overline{F})\cap E(H')|>|E(\overline{F})\cap E(H)|$,
contradicting the choice of $H$.

Case 2: For all $\alpha\in C(u)\backslash C(v)$ and $\beta\in C(v)\backslash C(u)$,
the edges of $C_{\alpha,\beta}^{e}$ lie on a cycle of $\overline{F}$.
Since the cycles of $\overline{F}$ are vertex-disjoint, it follows that
$|C(u)\backslash C(v)|=|C(v)\backslash C(u)|=1$ and there is only
one cycle $C_{\alpha,\beta}^{e}=C_{e}$, which, in its turn, is a cycle
of $\overline{F}$. This implies that $d_H(u)=d_H(v)=\Delta(G)-1\geq2$ and $d(u)=d(v)=\Delta(G)\geq3$.

Let us assume that $C_{e}-e$ is colored by the colors $\Delta-1$
and $\Delta$, alternately. For an edge $f$ of $C_e$, define $H_f$ obtained from $H$ as follows,

\begin{enumerate}
\item leave the edge $f$ uncolored, 
\item color the edges of the even path $C_{e}-f$ with colors $\Delta-1$
and $\Delta$, alternately, 
\item leave the colors of the rest of edges unchanged. 
\end{enumerate}

Note that the subgraphs $H_e$ and $H$ are the same, though they may have received different colorings, and for each edge $f$ of $C_e$, $H_f$ is a maximum $\Delta$-edge-colorable subgraph of $G$ with 
$|E(\overline{F})\cap E(H_f)|=|E(\overline{F})\cap E(H)|$. Similar to the consideration of the case 1 with respect to $H_f$, it can be shown that we can assume that if $f=(p,q)$ is an edge of $C_e$, then $d_{H_f}(p)=d_{H_f}(q)=\Delta(G)-1\geq2$ and $d(p)=d(q)=\Delta(G)\geq3$. This implies that the vertices of the cycle $C_e$ are of degree $\Delta$. Since $\Delta\geq3$, each vertex of $C_e$ is incident to $\Delta-2\geq 1$ edges that do not lie on $C_{e}$, and which are colored by the colors $1,...,\Delta-2$ in every $H_f$.
Let $\theta$ be a proper partial $\Delta$-edge-coloring
of $G$ which is obtained from the coloring of $H$ by deleting the
colors of all edges of $C_{e}$. Since $C_{e}$ is an odd cycle, it
follows that there is a $1$-$\Delta$-alternating path $P_{w}$ (with respect to the edge-coloring $\theta$) 
that starts from a vertex $w\in V(C_{e})$
and does not end at a vertex of $C_{e}$. Choose an edge $g=(w,z)\in E(C_{e})$,
and let $h$ be the other edge of $C_{e}$ that is incident to $w$.
Consider a proper partial $\Delta$-edge-coloring of $G$ obtained
from $\theta$ as follows, 
\begin{enumerate}
\item shift the colors on the path $P_{w}$, and clear the color of the
edge that is incident to $z$ and has a color $1$ in $\theta$,
\item color $g$ with color $1$, and color the edges of the even path $C_{e}-g$
with colors $\Delta-1$ and $\Delta$ alternately, starting from the edge
$h$. 
\end{enumerate}

This new partial $\Delta$-edge-coloring
of $G$ corresponds to a maximum $\Delta$-edge-colorable subgraph
$H'$ of $G$, which satisfies $
|E(\overline{F})\cap E(H')|>|E(\overline{F})\cap E(H_g)|=|E(\overline{F})\cap E(H)|$,
contradicting the choice of  $H$. $\square$ \end{proof}

\begin{corollary}\label{2factorExtension}Let $\overline{F}$ be any 2-factor
of a graph $G$ with $\Delta(G)\geq3$. Then, there is a maximum $\Delta(G)$-edge-colorable
subgraph $H$ of $G$, such that $E(\overline{F})\subseteq E(H)$. \end{corollary}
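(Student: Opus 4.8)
\textbf{Proof proposal for Corollary \ref{2factorExtension}.}

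The plan is to derive this directly from Theorem \ref{CycleSystemExtension}. First I would recall the definition: a $2$-factor $\overline{F}$ of $G$ is a spanning subgraph in which every vertex has degree exactly $2$. A standard observation is that such a subgraph decomposes uniquely into a set of cycles that are pairwise vertex-disjoint (indeed, since every vertex of $\overline{F}$ has degree $2$, the connected components of $\overline{F}$ are cycles, and distinct components share no vertex). Hence $\overline{F}$ can be regarded as a set of vertex-disjoint cycles of $G$, so the hypothesis of Theorem \ref{CycleSystemExtension} is met.

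Second, I would simply invoke Theorem \ref{CycleSystemExtension} with this set of vertex-disjoint cycles, using the hypothesis $\Delta(G)\geq 3$ that is already assumed in the corollary. The theorem yields a maximum $\Delta(G)$-edge-colorable subgraph $H$ of $G$ with $E(\overline{F})\subseteq E(H)$, which is exactly the conclusion.

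There is essentially no obstacle here: the corollary is the specialization of Theorem \ref{CycleSystemExtension} to the case where the vertex-disjoint cycles happen to cover all of $V(G)$. The only point worth stating explicitly is the reduction of a $2$-factor to a family of vertex-disjoint cycles, and that the degree condition $\Delta(G)\geq3$ carries over verbatim. One could also remark that the degree hypothesis is genuinely needed, since for $\Delta(G)=2$ the graph $G$ is itself a disjoint union of paths and cycles and a $2$-factor (if it exists) already uses all the cyclic part, leaving nothing to extend — but this is not required for the proof.
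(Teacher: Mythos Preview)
Your proposal is correct and matches the paper's approach: the paper states this corollary without proof, treating it as an immediate consequence of Theorem \ref{CycleSystemExtension}, and your argument spells out precisely that reduction (a $2$-factor is a set of vertex-disjoint cycles, so the theorem applies verbatim).
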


\section{Maximum $\Delta$-edge-colorable subgraphs: Cuts and size}

\begin{theorem}\label{cutCondition} Let $H$ be any maximum $\Delta(G)$-edge-colorable
subgraph of a graph $G$. Then 
\begin{enumerate}
\item $|\partial_{H}(X)|\geq\lceil\frac{|\partial_{G}(X)|}{2}\rceil$ for each $X\subseteq V(G)$,
\item $d_{H}(x)\geq\lceil\frac{d_{G}(x)}{2}\rceil$ for each vertex $x$ of $G$, and
\item $\delta(H)\geq\lceil\frac{\delta(G)}{2}\rceil$.
\end{enumerate}
Furthermore, the bounds are best possible. 
\end{theorem}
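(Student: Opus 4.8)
The plan is to prove part~1 first; parts~2 and~3 then come for free, since part~2 is part~1 for the singleton set $X=\{x\}$ (recall that $G$ has no loops, so $|\partial_G(\{x\})|=d_G(x)$ and $|\partial_H(\{x\})|=d_H(x)$), and part~3 is part~2 minimized over all vertices. So all of the work is in the cut inequality.

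Fix $X\subseteq V(G)$ and write $\partial_G(X)$ as the disjoint union of $\partial_H(X)$ (the cut-edges lying in $H$) and the set $U$ of uncolored cut-edges. It suffices to produce an injection from $U$ into $\partial_H(X)$: then $|\partial_G(X)|=|\partial_H(X)|+|U|\le 2\,|\partial_H(X)|$, and since $|\partial_H(X)|$ is an integer this forces $|\partial_H(X)|\ge\lceil|\partial_G(X)|/2\rceil$. To build the injection I would invoke Theorem~\ref{assignment}: choose, for every uncolored edge $e_i=(u_i,v_i)$, colors $\alpha_i\in\overline{C}(u_i)$ and $\beta_i\in\overline{C}(v_i)$ such that the odd cycles $C_i:=C_{\alpha_i,\beta_i}^{e_i}$ are pairwise edge-disjoint. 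The only fact I need about $C_i$ beyond its being a cycle of $G$ is that $E(C_i)\setminus\{e_i\}\subseteq E(H)$, because $C_i-e_i$ is an $\alpha_i$-$\beta_i$-alternating path, all of whose edges are colored.

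Now take $e_i\in U$. Since $e_i$ has exactly one end in $X$ it belongs to $\partial_G(X)$, and since every cycle of $G$ meets the edge cut $\partial_G(X)$ in an even number of edges, $C_i$ contains a further cut-edge $g_i\neq e_i$; as $g_i\in E(C_i)\setminus\{e_i\}$ we get $g_i\in E(H)$, hence $g_i\in\partial_H(X)$. The assignment $e_i\mapsto g_i$ is injective on $U$: if $g_i=g_j$ with $i\neq j$, that edge lies in $E(C_i)\cap E(C_j)=\emptyset$, a contradiction. This proves part~1, and with it parts~2 and~3. The only non-routine ingredient is the edge-disjointness of the $C_i$, which is exactly Theorem~\ref{assignment}; everything else is the parity count for cut-crossings, so I do not anticipate a genuine obstacle here beyond correctly invoking that theorem and recording that $C_i-e_i$ lies in $H$.

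For ``best possible'' I would exhibit the fat triangles. For $\Delta\ge 2$ let $T_\Delta$ be the multigraph on $\{a,b,c\}$ with $\lceil\Delta/2\rceil$ parallel $ab$-edges and $\lfloor\Delta/2\rfloor$ parallel edges on each of $bc$ and $ca$, so $\Delta(T_\Delta)=\Delta$ and $\delta(T_\Delta)=2\lfloor\Delta/2\rfloor$. Any two edges of a subgraph of $T_\Delta$ share a vertex, so every matching in such a subgraph has one edge and $\chi'(T')=|E(T')|$ for every subgraph $T'$; hence $\chi'(T_\Delta)=\Delta+\lfloor\Delta/2\rfloor>\Delta$ (so $T_\Delta$ is class~II) and a maximum $\Delta$-edge-colorable subgraph of $T_\Delta$ has exactly $\Delta$ edges. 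Deleting all $\lfloor\Delta/2\rfloor$ edges on $ca$ yields such a subgraph $H$ with $d_H(a)=\lceil\Delta/2\rceil=\lceil d_{T_\Delta}(a)/2\rceil$, $d_H(c)=\lfloor\Delta/2\rfloor=\lceil d_{T_\Delta}(c)/2\rceil$, and $\delta(H)=\lfloor\Delta/2\rfloor=\lceil\delta(T_\Delta)/2\rceil$; taking $X=\{a\}$ (cut size $\Delta$, odd when $\Delta$ is odd) and $X=\{c\}$ (cut size $2\lfloor\Delta/2\rfloor$) shows that part~1 is attained as well. Since $\Delta$ ranges over all integers $\ge 2$, none of the three bounds can be improved. (For $\Delta=2$ this example is simply a triangle with $H$ a path.)
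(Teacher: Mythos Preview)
Your argument for parts~1--3 is essentially the paper's own: invoke Theorem~\ref{assignment} to obtain pairwise edge-disjoint cycles $C_i$, observe that each $C_i$ with $e_i\in\partial_G(X)$ must cross the cut at a second (colored) edge, and use edge-disjointness to make the assignment $e_i\mapsto g_i$ injective. Parts~2 and~3 then follow by specialization, exactly as you say.

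Where you and the paper diverge is in the tightness construction. Your fat triangles are correct and pleasantly direct, but they rely essentially on multiple edges. The paper takes more trouble to build \emph{simple} tight examples: for each $r\ge 1$ it takes $r$ copies of $K_{2r+1,2r+1}$ with one edge subdivided (a well-known $(2r+1)$-critical graph), identifies the subdivision vertices, and joins two copies of the result by a single edge to obtain a $(2r+1)$-regular simple graph possessing a maximum $(2r+1)$-edge-colorable subgraph of minimum degree $r+1$. Thus the paper establishes the stronger fact that the bounds are tight already within the class of simple graphs; your examples, while sufficient for the theorem as stated, do not show this. The paper also sketches a multigraph construction (identifying the bivalent vertices of $k$ copies of a fat triangle with one thin side), which is close in spirit to yours.
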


\begin{proof} 1. Let $X\subseteq V(G)$, and assume
that $\partial_{G}(X)$ contains $k$ uncolored edges. 
By Theorem \ref{assignment}, there is 
an assignment of colors to uncolored edges with respect to $H$ such
that the corresponding cycles do not intersect. Since every 
cycle $C_{\alpha,\beta}^{e}$ of an uncolored edge $e$ of $\partial_{G}(X)$,
intersects $\partial_{G}(X)$ at least twice, there are at least 
$k$ pairwise different edges of $H$ that belong to $\partial_{G}(X)$.

Statements 2. and 3. follow directly from the first. It remains to show that the bounds are best possible. 
Let $H_r$ be the complete bipartite graph $K_{2r+1,2r+1}$ with a subdivided edge. It is well known, that $H_r$ is 
$(2r+1)$-critical. Therefore it has a $(2r+1)$-edge-coloring,  that leaves precisely one 
edge, which is incident to a vertex of degree 2, uncolored. Take $r$ copies
of $H_r$ and identify the vertices of degree two to obtain the graph $H$. 
Now take two copies of $H$
and connect the vertices of degree $2r$ with an edge to obtain the
graph $G$. It is not hard to observe that $G$ has a maximum $(2r+1)$-edge-colorable subgraph of minimum degree $r+1$. 
This implies that the bounds are best possible (for statement 3 as well, since $G$ is regular). Note, that the graphs are simple.

For multi-graphs, take $k > 1$ copies of the graph with three vertices, one vertex of degree 2 and the other two
of degree $2k$. These graphs have a $2k$-edge-colorable subgraph that leaves precisely one 
edge uncolored, which is incident to a bivalent vertex. Identify the bivalent vertices to obtain a
$2k$-regular graph with the desired properties. 
$\square$
\end{proof}

Next we will construct some graphs to which we will refer in the following to show that the bounds of Theorem \ref{Multigraphgirth} are best possible. Recall that, by definition, the girth of a graph is always at least three.

\begin{lemma} \label{Example}
For each $k \geq 1$, there is a graph $G_k$ with girth $g  = 2k+1$, such that $|E(H_k)| = \frac{2k}{2k+1}|E(G_k)|$, 
for each maximum $\Delta(G_k)$-edge-colorable subgraph $H_k$.
Furthermore, there is a maximum $\Delta(G_k)$-edge-colorable subgraph $H_k^*$ 
with $\Delta(H_k^*) = \lceil \frac{2k}{2k+1} \Delta(G_k) \rceil$.
\end{lemma}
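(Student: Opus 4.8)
The plan is to take $G_k$ to be the multigraph $C_{2k+1}^{(2k+1)}$ obtained from the odd cycle on vertices $v_0,\dots ,v_{2k}$ by replacing each edge $\{v_i,v_{i+1}\}$ (indices mod $2k+1$) with $2k+1$ parallel copies. Its underlying simple graph is $C_{2k+1}$, so $G_k$ has girth $2k+1$; moreover $\Delta(G_k)=2(2k+1)=4k+2$ and $|E(G_k)|=(2k+1)^2$. The whole argument rests on one elementary observation: in any proper $(4k+2)$-edge-coloring of any subgraph of $G_k$, each color class is a matching of $G_k$, hence projects to a matching of $C_{2k+1}$ and therefore contains at most $k$ edges; consequently every $(4k+2)$-edge-colorable subgraph of $G_k$ has at most $(4k+2)k=2k(2k+1)$ edges.

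Next I would show this bound is attained, and in fact attained by a subgraph of the right maximum degree. Let $H_k^{*}=C_{2k+1}^{(2k)}$, obtained by keeping exactly $2k$ of the $2k+1$ parallel edges at each position; it is $4k$-regular with $2k(2k+1)$ edges. To see that $\chi'(H_k^{*})\le 4k+2$, recall that $C_{2k+1}$ has exactly $2k+1$ maximum matchings (the one omitting $v_j$ is the unique perfect matching of the path $C_{2k+1}-v_j$), each of size $k$, so by cyclic symmetry each of the $2k+1$ positions of the cycle lies in exactly $k$ of them. Index $4k+2$ colors by pairs $(j,s)$ with $0\le j\le 2k$ and $s\in\{1,2\}$; at each position assign its $2k$ parallel edges bijectively to the $2k$ colors $(j,s)$ for which the matching omitting $v_j$ uses that position. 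Then color class $(j,s)$ is precisely that maximum matching of $C_{2k+1}$, so the coloring is proper. Hence $H_k^{*}$ is a $(4k+2)$-edge-colorable subgraph of $G_k$ with $2k(2k+1)$ edges.

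Combining the two steps, the maximum number of edges of a $\Delta(G_k)$-edge-colorable subgraph of $G_k$ equals $2k(2k+1)$, so every maximum $\Delta(G_k)$-edge-colorable subgraph $H_k$ satisfies $|E(H_k)|=2k(2k+1)=\frac{2k}{2k+1}(2k+1)^2=\frac{2k}{2k+1}|E(G_k)|$ (in particular $H_k\neq G_k$, so $G_k$ is class II). Furthermore $H_k^{*}$ is itself such a maximum subgraph, and $\Delta(H_k^{*})=4k=\lceil\frac{2k}{2k+1}(4k+2)\rceil=\lceil\frac{2k}{2k+1}\Delta(G_k)\rceil$, since $\frac{2k}{2k+1}(4k+2)=4k$ exactly.

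The only non-routine point is the $(4k+2)$-edge-colorability of $C_{2k+1}^{(2k)}$, i.e. checking that the matching bound is met with equality; the maximum-matching decomposition above handles this, but one could instead invoke the standard fact that for a multigraph whose underlying graph is an odd cycle $C_n$ one has $\chi'=\max\{\Delta,\lceil 2|E|/(n-1)\rceil\}$, which here gives $\chi'(C_{2k+1}^{(2k)})=\max\{4k,4k+2\}=4k+2$. Everything else — the girth, the degree and edge counts, and the evaluation $\frac{2k}{2k+1}(4k+2)=4k$ — is immediate.
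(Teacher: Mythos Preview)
Your argument is correct and follows the same strategy as the paper---take $G_k$ to be an odd cycle $C_{2k+1}$ with multiplied edges, bound color classes by $k$, and exhibit an extremal subgraph---but with a different parameter choice. The paper uses multiplicity $2k$ (so $\Delta(G_k)=4k$, $|E(G_k)|=2k(2k+1)$), and takes $H_k^{*}$ to have one edge of multiplicity $2k$ and the remaining $2k$ edges of multiplicity $2k-1$, giving $\Delta(H_k^{*})=4k-1=\lceil\tfrac{2k}{2k+1}\cdot 4k\rceil$; its colorability is asserted as ``easy to see.'' Your choice of multiplicity $2k+1$ makes $H_k^{*}=C_{2k+1}^{(2k)}$ (which is exactly the paper's $G_k$) uniformly $4k$-regular, and your explicit decomposition into $2(2k+1)$ maximum matchings of $C_{2k+1}$ is a cleaner justification of the $(4k+2)$-colorability than the paper supplies for its version. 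Both constructions yield the same ratio $\tfrac{2k}{2k+1}$; yours has the mild aesthetic advantage that $\tfrac{2k}{2k+1}\Delta(G_k)=4k$ is an integer, so no ceiling is actually needed.
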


\begin{proof}
For $k \geq 1$, let $G_k = C^{2k}_{2k+1}$ be a cycle of length $2k+1$, where each edge has multiplicity $2k$. 
Then $\Delta(C^{2k}_{2k+1})=4k=\Delta$, and $|E(C^{2k}_{2k+1})|=2k(2k+1)$. Every color class contains at most $k$ edges, hence 
$|E(H_k)| \leq 4k^2$, for every maximum $\Delta$-edge-colorable subgraph. 

Let $H_k^*$ be the subgraph of $G_k$ with
$4k^2$ edges, one edge of multiplicity $2k$ and the remaining $2k$ edges of multiplicity $2k-1$. It is easy to see that 
$H_k^*$ is $4k$-edge-colorable, thus $|E(H_k)| = 4k^2$ for all maximum $\Delta$-edge-colorable subgraphs of $G_k$.
Furthermore $\Delta(H_k^*) = 4k-1  = \lceil 4k(1 - \frac{1}{2k+1}) \rceil$. $\square$
\end{proof}

\begin{theorem}\label{Multigraphgirth} If $G$ is a graph with girth $g \in \{2k, 2k+1\}$ 
($k\geq 1$), and $H$ a maximum $\Delta(G)$-edge-colorable
subgraph of $G$, then $|E(H)| \geq \frac{2k}{2k+1}|E(G)|$, 
and the bound is best possible. 
\end{theorem}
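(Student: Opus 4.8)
The plan is to combine the edge-disjointness of the cycles $C^{e}_{\alpha,\beta}$ supplied by Theorem~\ref{assignment} with a lower bound on the length of each such cycle coming from the girth hypothesis. Write $n=|E(G)|-|E(H)|$ for the number of uncolored edges $e_1=(u_1,v_1),\dots,e_n=(u_n,v_n)$ with respect to a fixed proper $\Delta$-edge-coloring of $H$. If $n=0$ then $H=G$ and there is nothing to prove, so assume $n\geq 1$.

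First I would apply Theorem~\ref{assignment} to choose colors $\alpha_i\in\overline{C}(u_i)$, $\beta_i\in\overline{C}(v_i)$ such that the odd cycles $C^{e_i}_{\alpha_i,\beta_i}$ are pairwise edge-disjoint. Each $C^{e_i}_{\alpha_i,\beta_i}$ is an odd cycle of $G$ (an $\alpha_i$-$\beta_i$-alternating path of $H$ together with the uncolored edge $e_i$); since it has length at least $3$ it uses no two parallel edges and hence projects to a cycle of the same length in the underlying simple graph of $G$. Because $g\in\{2k,2k+1\}$, every odd cycle of $G$ has length at least $2k+1$: when $g=2k+1$ the girth itself is an odd number $\geq 2k+1$, and when $g=2k$ the shortest odd cycle still has length $\geq 2k+1$. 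Consequently each $C^{e_i}_{\alpha_i,\beta_i}$ contains the uncolored edge $e_i$ together with at least $2k$ edges of $H$.

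Now edge-disjointness turns this into the desired inequality: the sets $E(C^{e_i}_{\alpha_i,\beta_i})\cap E(H)$ for $i=1,\dots,n$ are pairwise disjoint, and each has size at least $2k$, so $|E(H)|\geq 2kn = 2k\bigl(|E(G)|-|E(H)|\bigr)$. Rearranging yields $(2k+1)|E(H)|\geq 2k|E(G)|$, that is, $|E(H)|\geq\frac{2k}{2k+1}|E(G)|$. For the statement that the bound is best possible I would simply invoke Lemma~\ref{Example}: the graph $G_k=C^{2k}_{2k+1}$ has girth $2k+1\in\{2k,2k+1\}$, and every maximum $\Delta(G_k)$-edge-colorable subgraph $H_k$ satisfies $|E(H_k)|=\frac{2k}{2k+1}|E(G_k)|$, so equality is attained.

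The only substantive ingredient is Theorem~\ref{assignment}; once the cycles of the uncolored edges can be chosen pairwise edge-disjoint, the bound follows by a one-line averaging argument. The points to be careful about are therefore the girth bookkeeping — that an even girth value $2k$ still forces odd cycles of length $\geq 2k+1$, and that an odd cycle in a multigraph of length $\geq 3$ really does witness this girth in the underlying simple graph — together with the degenerate bipartite / class I case $n=0$.
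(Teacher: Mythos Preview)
Your proposal is correct and follows essentially the same approach as the paper: invoke Theorem~\ref{assignment} to obtain pairwise edge-disjoint odd cycles $C^{e_i}_{\alpha_i,\beta_i}$ through the uncolored edges, use the girth hypothesis to bound each cycle's length below by $2k+1$, and count. Your presentation is in fact a bit more explicit than the paper's about the girth bookkeeping (the even-girth case $g=2k$ and the projection of each $C^{e_i}_{\alpha_i,\beta_i}$ to the underlying simple graph), and you package the final inequality directly as $|E(H)|\geq 2kn$ rather than passing through the auxiliary subgraph $G'=\bigcup_i C^{e_i}_{\alpha_i,\beta_i}$, but the substance is identical.
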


\begin{proof} Let $H$ be a maximum $\Delta(G)$-edge-colorable subgraph of $G$, and 
$\{e_1, \dots, e_n\}$ be the set of uncolored edges. Let $C_{\alpha_i, \beta_i}^{e_i}$ be the pairwise edge-disjoint 
cycles of Theorem \ref{assignment}, $G'=(V(G), \bigcup_{i=1}^n E(C_{\alpha_i, \beta_i}^{e_i}))$, and $H'$ the colored
subgraph of $G'$. Then it follows by Theorem \ref{assignment}, that $|E(H')| \geq \frac{2k}{2k+1}|E(G')|$. Furthermore,
$E(G)- E(G')=E(H)- E(H')$, and hence $|E(H)|= |E(H')|+|E(G)-E(G')| \geq \frac{2k}{2k+1}|E(G')| + |E(G)-E(G')| \geq \frac{2k}{2k+1}|E(G)|$.
The graphs $G_k$ of  Lemma \ref{Example} show that the bound is best possible. 
$\square$
\end{proof}

Theorem \ref{cutCondition} implies that $\Delta(H) \geq \lceil \frac{\Delta(G)}{2} \rceil$. 
However for the maximum vertex degree of a maximum $\Delta(G)$-edge-colorable subgraph $H$ of $G$ much better 
bounds can be proved. For this we need the well known fact, that 
(*) $\chi'(G) \leq  \frac{g}{g-1}\Delta(G) + \frac{g-3}{g-1}$ for graphs $G$ with $\chi'(G) > \Delta(G) +1$ 
and odd girth $g$. This result is easily deducible from the results of Kierstead \cite{Kierstead} (about acceptable paths), 
and it implies that 
$\Delta(G) = \chi'(H) \leq \frac{g}{g-1}\Delta(H) + \frac{g-3}{g-1}$. In \cite{girthEckhard} it is proved 
that $\chi'(G) \leq \Delta(G) + \lceil \frac{\mu(G)}{\lfloor \frac{g}{2}\rfloor}\rceil$. We summarize the consequences of these
two results in the following corollary.  

\begin{corollary} If $G$ is a graph with girth $g \in \{2k, 2k+1\}$ 
$(k\geq 1)$, and $H$ a maximum $\Delta(G)$-edge-colorable subgraph of $G$, then
\begin{enumerate}
\item $\chi'(H) \geq \max \{\chi'(G) - \lceil \frac{\mu(G)}{k}\rceil, \frac{2k}{2k+1}\chi'(G) - \frac{2k-2}{2k+1}\}$,
\item $\Delta(H) \geq \max \{ \Delta(G) - \lceil \frac{\mu(G)}{k}\rceil, \frac{2k}{2k+1}\Delta(G) - \frac{2k-2}{2k+1} \}$.
\end{enumerate}
\end{corollary}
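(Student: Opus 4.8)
The plan is to derive both inequalities directly from the two edge-colouring bounds quoted in the paragraph preceding the corollary, applied to the maximum $\Delta(G)$-edge-colorable subgraph $H$ together with Theorem \ref{cutCondition} (in the form $\chi'(H)=\Delta(G)$, which holds by definition of $H$). The key observation is that every inequality we need is obtained by treating $H$ as the ``base'' graph and $G$ as the graph whose chromatic index is being bounded; since $\Delta(G)=\chi'(H)$, each bound on $\chi'(G)$ in terms of $\Delta(G)$ turns into a lower bound on $\Delta(H)$, and each bound on $\chi'(G)$ in terms of $\chi'(H)$ turns into a lower bound on $\chi'(H)$.

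For the second term in each maximum, I would start from (*): if $\chi'(H)>\Delta(H)+1$, then $\chi'(H)\le \frac{g'}{g'-1}\Delta(H)+\frac{g'-3}{g'-1}$, where $g'$ is the odd girth of $H$; since $H\subseteq G$ we have $g'\ge g$, and the function $x\mapsto \frac{x}{x-1}t+\frac{x-3}{x-1}$ is monotone in the right direction, so we may replace $g'$ by the odd girth of $G$, which is at least $2k+1$ when $g\in\{2k,2k+1\}$. Rearranging $\chi'(H)=\Delta(G)\le\frac{2k+1}{2k}\Delta(H)+\frac{2k-2}{2k}$ gives $\Delta(H)\ge \frac{2k}{2k+1}\Delta(G)-\frac{2k-2}{2k+1}$, which is statement 2's second term; the same manipulation with $\chi'(G)$ in place of $\Delta(G)$ on the left (using $\chi'(G)\le\frac{g}{g-1}\chi'(H)+\frac{g-3}{g-1}$, valid when $\chi'(G)>\Delta(G)+1$, equivalently $\chi'(G)>\chi'(H)+1$) gives statement 1's second term. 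The degenerate cases $\chi'(H)\le\Delta(H)+1$ and $\chi'(G)\le\chi'(H)+1$ must be checked separately, but there the claimed lower bound is weaker than the trivial bound $\Delta(H)\ge\Delta(G)-1$ (resp.\ $\chi'(H)\ge\chi'(G)-1$), so it holds automatically.

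For the first term in each maximum I would invoke the bound of \cite{girthEckhard}: $\chi'(G)\le\Delta(G)+\lceil\mu(G)/\lfloor g/2\rfloor\rceil$. Since $g\in\{2k,2k+1\}$ we have $\lfloor g/2\rfloor=k$, so $\chi'(G)\le\Delta(G)+\lceil\mu(G)/k\rceil$. Substituting $\Delta(G)=\chi'(H)$ yields $\chi'(H)\ge\chi'(G)-\lceil\mu(G)/k\rceil$, which is statement 1's first term; and since $\chi'(H)\le\Delta(H)+1$ is not available in general, I instead use $\Delta(G)=\chi'(H)\le\Delta(H)+\lceil\mu(H)/k\rceil\le\Delta(H)+\lceil\mu(G)/k\rceil$ (using $\mu(H)\le\mu(G)$ and the odd-girth/girth of $H$ being at least that of $G$), giving $\Delta(H)\ge\Delta(G)-\lceil\mu(G)/k\rceil$, i.e.\ statement 2's first term. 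Taking the maximum over the two estimates in each part completes the proof.

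The main obstacle is bookkeeping around the hypotheses of the two imported inequalities rather than any genuine difficulty: (*) is stated only for $\chi'(G)>\Delta(G)+1$, so one must carefully translate this to a hypothesis on $H$ (namely $\chi'(H)>\Delta(H)+1$, since $\chi'(H)=\Delta(G)$) and separately dispatch the excluded cases via the trivial bounds $\Delta(H)\ge\Delta(G)-1$ and $\chi'(H)\ge\chi'(G)-1$ (the latter from Theorem \ref{Multigraphgirth} or directly). One should also double-check that the odd girth, not merely the girth, is what appears in (*), and that passing from $G$ to its subgraph $H$ only increases girth (hence the monotonicity argument goes the right way) — these are the only points where a sign error could creep in.
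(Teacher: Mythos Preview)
Your approach is essentially the paper's: the paper derives the corollary directly from the two cited bounds (*) and \cite{girthEckhard} together with the key identity $\chi'(H)=\Delta(G)$, exactly as you do, and your write-up merely fills in details (monotonicity in the girth when passing to the subgraph $H$, and the degenerate cases $\chi'(\cdot)\le\Delta(\cdot)+1$) that the paper leaves implicit. One small correction: the equality $\chi'(H)=\Delta(G)$ is not the content of Theorem~\ref{cutCondition} (which concerns cut sizes); it follows, as you in fact say, straight from maximality of $H$ --- if $\chi'(H)<\Delta(G)$ one could colour any uncolored edge with a spare colour --- so just drop that citation.
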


\subsection{Simple graphs}

\begin{theorem}\label{MatchingComplement} Every simple graph
$G$ contains a maximum $\Delta$-edge-colorable subgraph, such
that the uncolored edges form a matching. \end{theorem}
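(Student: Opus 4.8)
The plan is to start from an arbitrary maximum $\Delta$-edge-colorable subgraph $H$ of the simple graph $G$, properly colored with colors $1,\ldots,\Delta$, and among all such maximum subgraphs (with all their colorings) pick one for which the uncolored edge set $E(G)\setminus E(H)$ has as few pairs of adjacent edges as possible. If this minimum is zero we are done, so I would assume for contradiction that two uncolored edges $e=(u,v)$ and $e'=(u,w)$ share the vertex $u$. Because $H$ is maximum, every color $\alpha\in\overline{C}(u)$ satisfies $\alpha\in C(v)\cap C(w)$; pick $\beta\in\overline{C}(v)$. Since $G$ is simple, $v\ne w$, and I would examine the Kempe chain $C_{\alpha,\beta}^{e}$, the odd cycle through $e$ guaranteed by the discussion preceding Lemma~\ref{CyclesIntersection}.

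The key move is a recoloring/swap argument. I would try to interchange colors $\alpha$ and $\beta$ along an appropriate $\alpha$-$\beta$-alternating path starting at $v$ (or along the relevant Kempe chain) so that afterwards $\alpha$ becomes available to color $e$, or at least so that the number of adjacent pairs of uncolored edges strictly decreases without decreasing $|E(H)|$. Concretely: after swapping along the $\alpha$-$\beta$ chain at $v$, color $\alpha$ is now missing at $v$; if $\alpha$ is still missing at $u$ we could color $e$, contradicting maximality — so $\alpha$ must now be present at $u$, which forces the $\alpha$-$\beta$ chain from $v$ to have ended at $u$, i.e.\ the chain is exactly $C_{\alpha,\beta}^{e}-e$. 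I would then bring in the second uncolored edge $e'$: using Lemma~\ref{CyclesIntersection} (with the roles $C_{\alpha,\beta}^{e}$ and $C_{\alpha,\gamma}^{e'}$ for suitable $\gamma\in\overline{C}(w)$, noting $\alpha\in\overline{C}(u)$ as required by the lemma), and the fact that $G$ is simple so $e$ and $e'$ are non-parallel, I can locate where these two odd cycles sit relative to each other and perform a swap on one of them that frees a color for the other uncolored edge, again contradicting maximality of $|E(H)|$.

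An alternative, possibly cleaner route is an extremal argument in two stages: among maximum $\Delta$-edge-colorable subgraphs choose one, $H$, and among these a coloring minimizing $\sum_{v}\binom{d_{G}(v)-d_{H}(v)}{2}$, i.e.\ minimizing the total number of adjacencies among uncolored edges at each vertex. A vertex $u$ incident to two uncolored edges $e,e'$ has at least two missing colors; combined with simplicity of $G$ (so the two other endpoints are distinct) one shows via a single Kempe swap that either $e$ or $e'$ can be colored, or the quantity can be strictly decreased, the latter contradicting the choice of $H$ and its coloring. This localizes the whole argument to one vertex and one Kempe chain, and that is essentially the content one should aim to make rigorous.

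The main obstacle I anticipate is handling the case where the swap that should free color $\alpha$ for $e$ instead propagates all the way to $u$ (the Kempe chain from $v$ ends at $u$), because then a naive swap does nothing useful. This is exactly the situation that Lemma~\ref{CyclesIntersection} is designed to control, and simplicity of $G$ is essential here to rule out the degenerate parallel-edge configurations; so the delicate point is to feed the right two cycles into that lemma and then verify that the swap on the resulting structure genuinely decreases the number of adjacent uncolored-edge pairs while keeping $|E(H)|$ fixed. I would expect the bookkeeping — confirming that no other uncolored edge interferes with the swap, so that the extremal quantity truly drops — to be the part requiring the most care.
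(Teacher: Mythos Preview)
Your extremal setup---choosing a maximum $\Delta$-edge-colorable subgraph $H$ minimizing the number of adjacent pairs of uncolored edges---matches the paper, but the recoloring step has a genuine gap.

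The ``obstacle'' you flag is not a special case but the only case: by maximality of $H$ the $\alpha$-$\beta$ chain from $v$ \emph{always} terminates at $u$, so $C_{\alpha,\beta}^{e}-e$ is always the full Kempe path and a swap on it merely rotates colors around the odd cycle, freeing no color for $e$. Lemma~\ref{CyclesIntersection} cannot rescue this: its conclusion (that $e,e'$ meet at a vertex where $\alpha$ is missing) is exactly your hypothesis, so it returns no new information, and in any case it is a structural restriction, not a recoloring device. Your ``cleaner route'' correctly notes $|\overline{C}(u)|\ge 2$, say $\alpha,\alpha'\in\overline{C}(u)$, but the assertion that a single Kempe swap then works is unsupported: since both $\alpha,\alpha'\in C(v)$ by maximality, $v$ has degree~$2$ in the $(\alpha,\alpha')$-subgraph, so swapping that component leaves both colors present at $v$, while $u$, having degree~$0$ there, is untouched.

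What the paper supplies instead is a Vizing fan argument. In its notation the vertex with two uncolored edges is $v$, one of them is $(u,v)$, and one fixes $\alpha_{0}\in\overline{C}(v)$, $\beta\in\overline{C}(u)$. The $\alpha_0$-edge at $u$ is $(u,v_0)$, and one builds a maximal fan $(v_0,\alpha_0,v_1,\alpha_1,\dots,v_k,\alpha_k)$ at $u$ with $(u,v_i)$ colored $\alpha_i$ and $\alpha_i\in\overline{C}(v_{i-1})$. If $\overline{C}(v_k)=\emptyset$ then $d_H(v_k)=\Delta=d_G(v_k)$; rotating the fan (color $(u,v)$ with $\alpha_0$, shift each $(u,v_i)$ to $\alpha_{i+1}$, uncolor $(u,v_k)$) moves the uncolored edge from $v$ to $v_k$, and since $v$ carried at least two uncolored edges while $v_k$ carried none, the number of adjacent pairs strictly drops---contradicting the choice of $H$. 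Otherwise some $\alpha_{k+1}\in\overline{C}(v_k)$ repeats an earlier fan color $\alpha_i$, and a $\beta$-$\alpha_i$ Kempe chain from $v_k$ (with a short case analysis on where it ends) allows a partial fan rotation plus a swap that colors $(u,v)$, contradicting maximality of $|E(H)|$. The fan, not the cycles $C_{\alpha,\beta}^{e}$ or Lemma~\ref{CyclesIntersection}, is the missing engine.
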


\begin{proof}
Take a maximum $\Delta$-edge-colorable subgraph $H$ of $G$ which
minimizes the number of pairs of adjacent uncolored edges. The proof
will be completed if we show that this number is zero.

Assume, on the contrary, that there is a vertex $v$ which is incident
to two uncolored edges, one of which is $(u,v)$. Let $H$ be $\Delta$-edge-colored
and assume that $\alpha_{0}\in\overline{C}(v)$, $\beta\in\overline{C}(u)$. Since $H$
is a maximum $\Delta$-edge-colorable subgraph of $G$, there is an
edge $(u,v_{0})$ that is colored with color $\alpha_{0}$. Consider a maximal
fan beginning from the vertex $v_{0}$, that is a maximal sequence
$(v_{0},\alpha_{0},v_{1},\alpha_{1},...,v_{k},\alpha_{k})$ such that 

\begin{enumerate}
\item $(u,v_{i})\in E(G)$, and for $i=0,...,k$, the edge $(u,v_{i})$ is colored with color $\alpha_{i}$, 
\item $\alpha_{i}\in\overline{C}(v_{i-1})$, for  $i=1,...,k$, and 
\item $\alpha_0, \ldots ,\alpha_k$ are distinct colors. 
\end{enumerate}

We show that $\overline{C}(v_{k})\neq\emptyset$. Suppose that $\overline{C}(v_{k})=\emptyset$.
This means that $v_{k}$ is a vertex of maximum degree $\Delta(G)$
and all edges incident to it are from $H$. Consider a maximum $\Delta$-edge-colorable
subgraph $H'$ of $G$ obtained from $H$ as follows: Color $(u,v)$
by $\alpha_{0}$, for $i=0,...,k-1$ color $(u,v_{i})$ with color $\alpha_{i+1}$, and leave the edge
$(u,v_{k})$ uncolored. Then there are less pairs of adjacent
uncolored edges in $H'$ than in $H$, which
contradicts the choice of $H$.

Thus, there exist $\alpha_{k+1}\in\overline{C}(v_{k})$. The maximality
of the sequence $(v_{0},\alpha_{0},...,v_{k},\alpha_{k})$ implies that $\alpha_{k+1}\in\{\alpha_{0},...,\alpha_{k-1}\}$,
say $\alpha_{k+1}=\alpha_{i}$, for an $0\leq i\leq k-1$.
Recall that $\beta \in\overline{C}(u)$. The maximality of $H$ implies that
$\beta \in C(v_{k})$. Consider the maximal $\beta$-$\alpha_{k+1}=\beta$-$\alpha_{i}$ alternating
path $P$ beginning from the vertex $v_{k}$. We will show that our assumptions imply
that $H$ is not a maximum $\Delta$-edge-colorable subgraph of $G$, contradicting the choice of $H$.

Case 1: $P$ does not reach $u,v_{i},v_{i-1}$.
The following recoloring yields the desired contradiction.
Color the edge $(u,v)$ by $\alpha_{0}$, $(u,v_{j})$ by $\alpha_{j+1}$, $j=0,...,k-1$,
exchange the colors on the path $P$ and color $(u,v_{k})$ by $\beta$.

Case 2: $P$ reaches $v_{i}$.
Since the edge $(u,v_{i})$ is colored by $\alpha_{i}$ it follows 
that the path $P$ enters $v_{i}$ by an edge colored $\beta$. To get the contradiction, color
the edge $(u,v)$ by $\alpha_{0}$, $(u,v_{j})$ by $\alpha_{j+1}$, $j=0,...,i-1$,
exchange the colors on the path $P$ and color $(u,v_{i})$ by $\beta$.

Case 3: $P$ reaches $v_{i-1}$.
Since $\alpha_{i}\in\overline{C}(v_{i-1})$, it follows that the path
$P$ reaches $v_{i-1}$ by an edge colored $\beta$. Color the edge
$(u,v)$ by $\alpha_{0}$, $(u,v_{j})$ by $\alpha_{j+1}$, $j=0,...,i-2$,
exchange the colors on the path $P$ and color $(u,v_{i-1})$ by $\beta$ to get the desired
contradiction. $\square$ \end{proof}

Theorem \ref{MatchingComplement} is equivalent to 

\begin{theorem} If $G$ is a simple graph, then $r_{e}(G)=r'_{e}(G)$.
\end{theorem}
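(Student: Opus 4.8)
The plan is to prove the two inequalities $r_e(G)\le r'_e(G)$ and $r'_e(G)\le r_e(G)$ separately; the first holds for every graph, and only the second uses Theorem~\ref{MatchingComplement}. We may assume $G$ is class II, since otherwise the claim is trivial. Since $G$ is simple and class II, Vizing's Theorem~\ref{Vizing} (with $\mu(G)=1$) forces $\chi'(G)=\Delta(G)+1$, so the integer $k$ occurring in the definition of $r'_e$ equals $1$; consequently $r'_\phi(G)$ is just the size of a smallest color class of $\phi$, and $r'_e(G)$ is the minimum of this quantity over all $(\Delta(G)+1)$-edge-colorings $\phi$ of $G$. I will also use the identity $r_e(G)=|E(G)|-|E(H)|$ valid for every maximum $\Delta(G)$-edge-colorable subgraph $H$, recorded in the introduction.

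For $r_e(G)\le r'_e(G)$ I would take a $\chi'(G)$-edge-coloring $\phi$ attaining $r'_e(G)$ and let $F$ be a smallest color class, so $|F|=r'_e(G)$. The remaining $\Delta(G)$ color classes form a proper $\Delta(G)$-edge-coloring of $G-F$, so $G-F$ is a $\Delta(G)$-edge-colorable subgraph of $G$; hence a maximum such subgraph $H$ satisfies $|E(H)|\ge |E(G-F)|=|E(G)|-|F|$, and therefore $r_e(G)=|E(G)|-|E(H)|\le |F|=r'_e(G)$.

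For $r'_e(G)\le r_e(G)$ I invoke Theorem~\ref{MatchingComplement} to choose a maximum $\Delta(G)$-edge-colorable subgraph $H$ whose set of uncolored edges $F:=E(G)\setminus E(H)$ is a matching; then $r_e(G)=|E(G)|-|E(H)|=|F|$. Coloring $H$ properly with colors $1,\dots,\Delta(G)$ and giving every edge of $F$ the new color $\Delta(G)+1$ yields a proper $(\Delta(G)+1)$-edge-coloring $\phi$ of $G$ — here simplicity, i.e.\ the fact that $F$ is a matching, is exactly what is needed — and this coloring is optimal because $\chi'(G)=\Delta(G)+1$. Since $\phi^{-1}(\Delta(G)+1)=F$, a smallest color class of $\phi$ has size at most $|F|$, so $r'_e(G)\le r'_\phi(G)\le |F|=r_e(G)$. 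Combining the two inequalities gives $r_e(G)=r'_e(G)$.

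I do not expect a genuine obstacle here: all the difficulty is concentrated in Theorem~\ref{MatchingComplement}, and once that structural fact and the Vizing bound $k=1$ are in hand the equality is pure bookkeeping — which is precisely why the preceding sentence calls the two statements equivalent. (For that equivalence one also needs the reverse implication, obtained by observing that when $r'_e(G)=r_e(G)$ the inequality of the second paragraph is an equality: taking $\phi$ optimal with smallest color class $F$, the subgraph $G-F$ is $\Delta(G)$-edge-colorable with $|E(G-F)|=|E(G)|-r_e(G)=|E(H)|$, hence maximum, and its complement $F$ is a color class and therefore a matching.)
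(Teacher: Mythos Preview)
Your proof is correct and follows exactly the route the paper intends: the paper does not give a separate argument but simply declares the statement equivalent to Theorem~\ref{MatchingComplement}, and you have spelled out precisely that equivalence (using Vizing's bound to reduce to $k=1$, then the two easy inequalities). Your final parenthetical, recovering Theorem~\ref{MatchingComplement} from $r_e=r'_e$, is a nice bonus that makes the claimed equivalence fully explicit.
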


Lemma \ref{Example} shows that a maximum $\Delta$-edge-colorable subgraph of a multigraph can be class II as well. 
This cannot be the case for simple graphs as the following theorem shows.

\begin{theorem} \label{equality} If $H$ is a maximum $\Delta(G)$-edge-colorable
subgraph of a simple graph $G$, then $\Delta(H)=\Delta(G)$, i.e. $H$ is class I.
\end{theorem}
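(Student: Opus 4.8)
The plan is to prove $\Delta(H)=\Delta(G)$, which already yields the theorem: since $H$ is $\Delta(G)$-edge-colorable we have $\Delta(H)\le\chi'(H)\le\Delta(G)$, so $\Delta(H)=\Delta(G)$ forces $\chi'(H)=\Delta(H)$, i.e.\ $H$ is class~I. As $H\subseteq G$ we already know $\Delta(H)\le\Delta(G)=:\Delta$, so I only have to rule out $\Delta(H)\le\Delta-1$. Assume this. Then $d_H(x)\le\Delta-1$, hence $\overline C(x)\neq\emptyset$, for \emph{every} vertex $x$ of $G$: every vertex misses at least one color. Moreover a vertex of $G$ of degree $\Delta$ cannot have all of its edges colored, so $H$ is a proper subgraph of $G$ and there is an uncolored edge $e=(u,v)$.

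I would now obtain a contradiction by showing that the $\Delta$-edge-coloring of $H$ extends so as to include $e$, contradicting the maximality of $|E(H)|$. To do this I would run the Vizing-type fan-and-Kempe-chain argument rooted at $u$, exactly as in the proof of Theorem~\ref{MatchingComplement}: pick $\beta\in\overline C(u)$ and $\alpha_0\in\overline C(v)$, and build a maximal fan $(v_0,\alpha_0,\dots,v_k,\alpha_k)$ at $u$. The point that makes everything go through is that, because $\Delta(H)\le\Delta-1$, we have $\overline C(v_i)\neq\emptyset$ for every fan vertex, so the fan can never terminate by reaching a ``saturated'' vertex --- precisely the subcase that had to be treated separately, via a different recoloring, in Theorem~\ref{MatchingComplement}. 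A color $\alpha_{k+1}\in\overline C(v_k)$ exists, and maximality of the fan means that $\alpha_{k+1}$ is either missing at $u$ as well, in which case rotating the whole fan colors $e$, or already occurs on the fan as some $\alpha_i$, in which case rotating the appropriate initial segment of the fan together with a swap on a suitable Kempe chain in the colors $\beta$ and $\alpha_i$ colors $e$, in exactly the three subcases ``$P$ does not reach $u,v_i,v_{i-1}$'', ``$P$ reaches $v_i$'', ``$P$ reaches $v_{i-1}$'' of the proof of Theorem~\ref{MatchingComplement}. In every case $e$ receives a color, so $H$ was not maximum, a contradiction; hence $\Delta(H)=\Delta$ and $H$ is class~I.

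A quicker but less self-contained variant of the second step: in $H+e$ the only vertices that can have degree $\Delta$ are $u$ and $v$, and when both do they are joined by the edge $e$, so the subgraph of $H+e$ induced on its vertices of maximum degree is a subgraph of $K_2$ and hence a forest; Fournier's theorem (see \cite{Jensen_Toft}) then gives $\chi'(H+e)=\Delta(H+e)\le\Delta$, again contradicting the maximality of $H$. I do not expect a genuine conceptual obstacle here: once the reduction to $\Delta(H)\le\Delta-1$ is made, the statement is essentially Vizing's fan argument in the regime where every vertex has a spare color, and the only delicate part --- the fan rotations and Kempe-chain exchanges in the three subcases --- is already carried out, in the form needed, in the proof of Theorem~\ref{MatchingComplement}.
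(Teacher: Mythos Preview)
Your argument is correct, but the paper proves the theorem by a different and shorter route. Rather than assuming $\Delta(H)\le\Delta(G)-1$ and re-running the Vizing fan argument to reach a contradiction, the paper works directly: for any uncolored edge $e=(v,w)$ one has $\chi'(H+e)=\Delta(G)+1$, whence by Vizing's theorem (for simple graphs) $\Delta(H+e)=\Delta(G)$. Inside $H+e$ sits a $\Delta(G)$-critical subgraph $H'$ containing $e$, and Vizing's Adjacency Lemma then guarantees a vertex $x\neq v,w$ with $d_{H'}(x)=\Delta(G)$; since $x$ is not incident to $e$, already $d_H(x)=\Delta(G)$.

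What each approach buys: the paper's proof is a three-line application of the Adjacency Lemma, but imports that lemma as a black box. Your fan argument is longer but entirely self-contained within the paper, recycling the machinery of Theorem~\ref{MatchingComplement}; your observation that the assumption $\Delta(H)\le\Delta-1$ kills the one ``bad'' subcase (saturated fan vertex) is exactly the right reduction. Your Fournier variant is closest in spirit to the paper's proof, and in fact needs the same first step: to know that the vertices of maximum degree in $H+e$ are contained in $\{u,v\}$ you must first argue $\Delta(H+e)=\Delta(G)$, which again follows from Vizing applied to the simple graph $H+e$; once that is noted, Fournier finishes just as cleanly as the Adjacency Lemma.
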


\begin{proof} Let $e = (v,w) \in E(G)-E(H)$ be an uncolored edge. Then $\chi'(H+e) = \Delta(G)+1$, and hence, since $H+e$ is simple, $\Delta(H+e)= \Delta(G)$. The graph $H+e$ contains a 
$\Delta(G)$-critical subgraph $H'$, which clearly contains the edge $e$. By Vizing's Adjacency Lemma \cite{Vizing}, every vertex 
of $H'$ is adjacent to at least two vertices of maximum degree $\Delta(G)$. 
Thus there is a vertex $x \not= v,w$ of $H'$ with $d_{H'}(x)=\Delta(G)$, and hence 
$\Delta(H) = \Delta(G)$. $\square$

\end{proof}

\section{Discussion and some Conjectures}

Theorem \ref{MatchingComplement} says, that every simple class II graph $G$ has a 
maximum $\Delta$-edge-colorable subgraph $H$, such that $\chi'(G \backslash E(H))=1$. 
We believe that this can be generalized to multigraphs.

\begin{conjecture} \label{k_matching} If $G$ is a graph with $\chi'(G)=\Delta(G)+k$ 
$(k \geq 0)$,
then there is a maximum $\Delta(G)$-edge-colorable subgraph $H$ of $G$,
such that $\chi'(G\backslash E(H))=k$. 
\end{conjecture}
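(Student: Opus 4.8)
We outline a possible line of attack. First, Conjecture~\ref{k_matching} is equivalent to the equality $r_{e}(G)=r'_{e}(G)$ for all graphs, which places it alongside the cubic case of~\cite{part2} and the simple-graph case proved above. Indeed, from a maximum $\Delta$-edge-colorable subgraph $H$ with $\chi'(G\backslash E(H))=k$ one obtains a $\chi'(G)$-edge-coloring (color $H$ with $\Delta$ colors and $G\backslash E(H)$ with $k$ further colors) whose $k$ new classes have total size $|E(G)|-|E(H)|=r_{e}(G)$, so $r'_{e}(G)\le r_{e}(G)$; and $r_{e}(G)\le r'_{e}(G)$ always holds. Conversely, if $r_{e}(G)=r'_{e}(G)$, deleting the $k$ smallest classes of a coloring attaining $r'_{e}(G)$ leaves a $\Delta$-edge-colorable subgraph with $|E(G)|-r_{e}(G)$ edges, hence maximum, whose complement is a union of $k$ matchings; and since every $\Delta$-edge-colorable subgraph $H$ satisfies $\chi'(G\backslash E(H))\ge k$ (otherwise $\chi'(G)\le\Delta+k-1$), this complement has chromatic index exactly $k$. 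So it suffices to prove $r_{e}(G)=r'_{e}(G)$.

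The plan is induction on $k$. The case $k=0$ is trivial, and $k=1$ asks for a maximum $\Delta$-edge-colorable subgraph whose complement is a matching; here one would rerun the proof of Theorem~\ref{MatchingComplement} --- minimize the number of adjacent pairs of uncolored edges and apply the maximal-fan argument --- now allowing the edge colored $\alpha_{0}$ at $u$ to be parallel to $(u,v)$ and handling the coincidences among the $v_{i}$ that parallel edges create. For the inductive step, take $G$ with $\chi'(G)=\Delta+k$, $k\ge 2$, and let $M$ be a smallest color class of an optimal $\chi'(G)$-edge-coloring; then $M$ is a matching and $\chi'(G-M)=\Delta+k-1$. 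If $\Delta(G-M)=\Delta(G)$, then $\chi'(G-M)=\Delta(G-M)+(k-1)$, so the induction hypothesis produces a maximum $\Delta$-edge-colorable subgraph $H'$ of $G-M$ with $\chi'((G-M)\backslash E(H'))=k-1$, and one must verify (i) that $H'$ is still maximum in $G$, i.e.\ that $r_{e}(G-M)=r_{e}(G)-|M|$ --- which forces some maximum $\Delta$-edge-colorable subgraph of $G$ to avoid $M$, and is where the choice of $M$ must be exploited --- and (ii) that $M$ can be merged with the $k-1$ matchings in the complement of $H'$ to give an edge-coloring using at most $k$ colors. The remaining case $\Delta(G-M)=\Delta(G)-1$, in which $M$ saturates every maximum-degree vertex, would be treated separately, first peeling off a maximal matching through the vertices of maximum degree. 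A complementary tool, probably indispensable for $k\ge 2$, is the Goldberg--Seymour theorem: when $\chi'(G)=\Delta+k$ with $k\ge 2$ the chromatic index is density-driven, attained on an odd subgraph $S$ of large density, and one expects the uncolored edges of a well-chosen maximum subgraph to lie inside $S$ (as they do for the graphs $C^{m}_{2\ell+1}$ of Lemma~\ref{Example}), reducing the problem to coloring complements inside such dense odd subgraphs.

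The step I expect to be hardest is this ``compatibility'': maximality of $H$ and the condition $\chi'(G\backslash E(H))=k$ are two independent extremal requirements on $H$, and there is no evident single potential function minimized exactly at a subgraph meeting both. For $k=1$ the second requirement reduces to ``the complement is a matching'', which is measured by the number of adjacent uncolored pairs and is repaired by one fan/Kempe-chain recoloring; for $k\ge 2$ the complement is merely a bounded-degree graph (of maximum degree at most $\lfloor\Delta(G)/2\rfloor$ by Theorem~\ref{cutCondition}), and destroying, say, an odd cycle among the uncolored edges while keeping $|E(H)|$ fixed appears to require alternating structures much longer and more entangled than a single Kempe chain. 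A successful proof will, I suspect, either isolate the right monovariant --- the natural candidate being $|E(G\backslash E(H))|$ together with a measure of the ``excess'' structure among the uncolored edges (their maximum degree above $k$, or the odd cycles they contain) --- or reduce, via Goldberg--Seymour, to the essentially self-contained question on dense odd subgraphs.
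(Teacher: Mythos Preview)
The statement you are attempting is a \emph{conjecture} in the paper, not a theorem: the paper gives no proof of it. It merely asserts that Conjecture~\ref{k_matching} is equivalent to $r_e(G)=r'_e(G)$ for all graphs, and leaves both open. So there is no ``paper's own proof'' to compare your proposal against.

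Your first paragraph --- the equivalence of the conjecture with $r_e(G)=r'_e(G)$ --- is correct and in fact supplies the argument the paper omits. This much is fine.

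The remainder is, as you yourself frame it, only a line of attack, and the gaps you flag are genuine. Already the base case $k=1$ for multigraphs is not settled by ``rerunning'' the proof of Theorem~\ref{MatchingComplement}: that argument relies on the fan vertices $v_0,\dots,v_k$ being distinct and on each $(u,v_i)$ being a single edge, both of which fail once parallel edges are allowed; the recolorings in Cases~1--3 can then interfere with each other in ways you have not controlled. In the inductive step, item~(i) --- that $r_e(G-M)=r_e(G)-|M|$ for a smallest color class $M$ --- is the heart of the matter and is unproved; nothing in your outline forces any maximum $\Delta$-edge-colorable subgraph of $G$ to be disjoint from $M$. Invoking Goldberg--Seymour is reasonable heuristically, but you have not shown how it yields either (i) or the ``compatibility'' you correctly isolate as the crux. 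In short: your equivalence argument is sound, but what follows is a plausible research programme, not a proof, and the conjecture remains open exactly as the paper states.
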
 

This conjecture is equivalent to the following statement.

\begin{conjecture} Let $G$ be a graph, then $r_{e}(G)=r'_{e}(G)$. \end{conjecture}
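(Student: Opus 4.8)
The plan is to prove the two inequalities $r_e(G)\le r'_e(G)$ and $r'_e(G)\le r_e(G)$ separately, where $k=\chi'(G)-\Delta(G)$. The case $k=0$ is trivial, since then $G$ is class I and $r_e(G)=r'_e(G)=0$, so I assume $k\ge 1$ throughout.

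The inequality $r_e(G)\le r'_e(G)$ I would read straight off the definitions. Fix a $\chi'(G)$-edge-coloring $\phi$ attaining $r'_e(G)$, and let $i_1,\dots,i_k$ be the color classes realizing the minimum $\sum_{j=1}^k|\phi^{-1}(i_j)|$. Deleting these $k$ classes leaves a subgraph properly colored with the remaining $\Delta(G)$ colors, hence a $\Delta(G)$-edge-colorable subgraph with $|E(G)|-r'_e(G)$ edges. A maximum $\Delta(G)$-edge-colorable subgraph $H$ has at least as many edges, so $r_e(G)=|E(G)|-|E(H)|\le r'_e(G)$.

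The entire difficulty lies in the reverse inequality. Here I would start from a maximum $\Delta(G)$-edge-colorable subgraph $H$ with uncolored set $M=E(G)\setminus E(H)$, so $|M|=r_e(G)$, and try to promote the $\Delta$-coloring of $H$ to a $\chi'(G)$-coloring of $G$ in which $M$ consumes only $k$ fresh colors. This works exactly when $\chi'(M)\le k$: coloring $M$ with new colors $\Delta+1,\dots,\Delta+k$ gives a proper $(\Delta+k)$-edge-coloring of $G$ whose last $k$ classes total $|M|=r_e(G)$ edges, whence $r'_e(G)\le r_e(G)$. (The reverse bound $\chi'(M)\ge k$ is automatic, because any proper coloring of $G$ with $\Delta+\chi'(M)$ colors forces $\Delta+\chi'(M)\ge\chi'(G)$.) So the task is to produce a maximum $H$ whose complement satisfies $\chi'(M)\le k$, and I would attack this by the extremal-plus-recoloring scheme of Theorem \ref{MatchingComplement}: choose, among all maximum $\Delta(G)$-edge-colorable subgraphs, one that minimizes $\chi'(M)$, and argue by contradiction that this minimum is $k$. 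If $\chi'(M)\ge k+1$, I would locate a vertex forcing this excess and use the missing-color sets $\overline{C}(\cdot)$ at the endpoints of an uncolored edge — whose sizes are controlled by the degree bounds of Theorem \ref{cutCondition} — together with an alternating-path exchange to either enlarge $H$ (contradicting maximality) or lower $\chi'(M)$ (contradicting minimality). For simple graphs this is precisely Theorem \ref{MatchingComplement}, where $M$ becomes a matching and $\chi'(M)=1=k$.

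The main obstacle is exactly this last step once $G$ has parallel edges. Then a single vertex can be incident to as many as $k$ uncolored edges, so $M$ is a genuine multigraph rather than a matching, and the clean maximal-fan argument of Theorem \ref{MatchingComplement} no longer applies. Worse, the target $\chi'(M)\le k$ is strictly stronger than the degree bound $\Delta(M)\le k$: Shannon's and Vizing's theorems applied to $M$ only yield $\chi'(M)\le\lfloor 3k/2\rfloor$ and $\chi'(M)\le k+\mu(M)$, so a recoloring could leave $M$ class II with too many colors. Controlling the edge multiplicities that the exchanges create, and ruling out such a class II obstruction inside $M$, is where I expect the argument to break down — which is the reason the statement stands as a conjecture rather than a theorem.
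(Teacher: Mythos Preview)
This statement is Conjecture~2 in the paper, not a theorem; the paper gives no proof of it and explicitly leaves it open. You correctly recognize this in your final paragraph, so there is no ``paper's own proof'' to compare against.

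What your write-up actually accomplishes is worth recording. The inequality $r_e(G)\le r'_e(G)$ is proved cleanly from the definitions. For the reverse inequality you reduce the problem to the existence of a maximum $\Delta(G)$-edge-colorable subgraph $H$ with $\chi'(G\setminus E(H))\le k$; together with the automatic bound $\chi'(G\setminus E(H))\ge k$ that you note, this is precisely Conjecture~\ref{k_matching}. In other words, you have supplied a proof of the equivalence Conjecture~\ref{k_matching} $\Leftrightarrow$ Conjecture~2 that the paper merely asserts. Your remark that Theorem~\ref{MatchingComplement} settles the simple case (where $M$ is a matching, so $\chi'(M)=1=k$) is also correct and matches the unnumbered theorem the paper states immediately after Theorem~\ref{MatchingComplement}.

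Your diagnosis of the obstruction for multigraphs is accurate: the fan-and-Kempe-chain argument of Theorem~\ref{MatchingComplement} does not extend once parallel uncolored edges appear, and the needed conclusion $\chi'(M)\le k$ is genuinely stronger than the degree bound $\Delta(M)\le k$ one might hope to extract from Theorem~\ref{cutCondition}. There is no gap in your reasoning --- the gap is in the problem itself, which remains open.
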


\begin{acknowledgement}
We would like to thank our reviewers for their useful comments that helped us to improve the presentation of the paper. 
Our special thanks to the second reviewer, who has pointed out a mistake in the earlier version of the proof of Lemma \ref{CyclesIntersection}.
\end{acknowledgement}

\end{document}